\documentclass[ASNA,twocolumn]{USG} 
\usepackage{anyfontsize} %
\usepackage{colortbl}   
\usepackage{xcolor}     
\usepackage{steinmetz}
\usepackage{textcomp}
\usepackage{comment}
\usepackage{mathtools}
\graphicspath{{./images/}}
\articletype{RESEARCH ARTICLE}%

\begin{document}
\title{Scenario-based Data-Enabled Predictive Control: \\Robustification via the Scenario Approach}
\transtitle{Scenario-based Data-Enabled Predictive Control}
\author[1]{Sebastian Zieglmeier}[https://orcid.org/0009-0009-2516-3723]
\author[1]{Nikolas Recke}[https://orcid.org/0009-0007-3362-8194]
\author[1]{Mathias Hudoba de Badyn}[https://orcid.org/0000-0003-0955-2381]

\authormark{Zieglmeier \textsc{et al.}}
\titlemark{Scenario-based Data-Enabled Predictive Control: Robustification via the Scenario Approach}

\address[1]{\orgdiv{Department of Technology Systems, }\orgname{University of Oslo, }%
\orgaddress{\state{Oslo, }\country{Norway}}}%

\corres{Sebastian Zieglmeier  (\email{sebastiz@uio.no}) ~|~ Nikolas Recke (\email{nikolalr@uio.no}) ~|~ Mathias Hudoba de Badyn (\email{mathihud@uio.no})}



\fundingInfo{IDES (Intelligent Dynamic Energy Systems) project; FME Solar, Grant/Award Number: 350244}

\keywords{data-enabled predictive control | scenario approach | probabilistic constraint satisfaction | data-driven control | chance-constrained control}

\transkeywords{data-enabled predictive control | scenario approach | probabilistic constraint satisfaction | data-driven control | chance-constrained control}

\abstract[ABSTRACT]{This paper proposes Scenario-Based Data-Enabled Predictive Control (Scenario-DeePC), which integrates the scenario optimization framework into Data-enabled Predictive Control (DeePC) to provide probabilistic guarantees on constraint satisfaction under uncertainty. In contrast to existing methods, the uncertainty is characterized directly from data by constructing empirical disturbance scenarios from observed prediction errors, keeping the method fully consistent with the data-driven philosophy of DeePC and free of distributional assumptions. 
We establish the supporting theory, including a distribution-free probabilistic guarantee on constraint satisfaction and recursive feasibility of the receding-horizon scheme.
An adaptive extension collects scenarios online, enabling the controller to adjust to changing noise characteristics, disturbances, and operating-point-dependent model mismatches.
The approach is demonstrated on a linear Boeing 747 model and a nonlinear two-tank system, showing a significant reduction in constraint violations compared to standard DeePC, while maintaining comparable tracking performance in nominal conditions and improving tracking accuracy in the nonlinear setting.}



\copyright{This work has been submitted for possible publication. Copyright may be transferred without notice, after which this version may no longer be accessible.}


\maketitle


\section{Introduction}
\label{sec:Introduction}
Data-enabled Predictive Control, first proposed in~\cite{Coulson_2019}, provides a purely data-driven alternative to traditional Model Predictive Control (MPC). Rather than constructing and relying on an explicit parametric model of the system dynamics, DeePC directly formulates the optimal control problem using measured input-output data, avoiding an intermediate modeling step and the errors it can introduce. 
DeePC relies on Willems’ fundamental lemma~\cite{Willems_2004}, according to which measured trajectories of a persistently excited deterministic linear time-invariant (LTI) system span the entire behavior of the system. Organizing these trajectories in Hankel matrices provides a data-based representation of admissible future evolutions conditioned on past input-output sequences. This behavioral representation replaces the classical model-based predictor and removes the need for explicit knowledge of the system equations.
Real-world systems, however, are typically affected by measurement noise and may exhibit nonlinear behavior. As demonstrated in~\cite{Coulson_2019, Elokda_2021, Zieglmeier_2025, Huang_2023, Zieglmeier_2025_GS}, these challenges can be addressed by incorporating suitable regularization terms into the DeePC optimization problem. Such regularization relaxes the exact trajectory matching conditions implied by the lemma and enhances robustness with respect to noisy data and mild nonlinearities. 
Nevertheless, while regularization improves robustness of the implicit data-driven identification step in the trajectory reconstruction, as reviewed in~\cite[Sec.~3.2]{Berberich_2024}, it does not explicitly address the risk of constraint violations arising from uncertainty, like measurement noise and disturbances. 
Therefore, approaches explicitly addressing robust constraint satisfaction, ensuring performance and safety simultaneously, remain an open research question within data-driven predictive control. 

A principled framework for constraint satisfaction under uncertainty is provided by the scenario approach~\cite{Calafiore_2006, Campi_2008, Campi_2009}, which replaces intractable chance-constrained formulations with a finite set of sampled uncertainty realizations. For convex programs, this yields explicit, distribution-free probabilistic guarantees on constraint satisfaction without requiring knowledge of the disturbance distribution beyond the availability of independent samples. The scenario approach has been successfully integrated with MPC~\cite{Schildbach_2014, Calafiore_2012}, making its extension to the data-driven counterpart DeePC a natural step that enables systematic constraint satisfaction under noise and disturbances, complementing the regularization mechanisms of DeePC.

Despite these advances, a systematic integration of the scenario approach within the DeePC framework has, to the best of the authors’ knowledge, not yet been explored. Such a combination is appealing for several reasons. First, the scenario approach provides non-asymptotic probabilistic guarantees under the assumption of independent samples from an underlying uncertainty distribution, without requiring restrictive structural assumptions such as Gaussianity or boundedness. This makes it conceptually compatible with the data-driven philosophy of DeePC. Second, the resulting formulation remains computationally tractable, as uncertainty is incorporated through sampled constraints rather than through ambiguity sets or distributional reformulations. Third, the scenario methodology provides explicit design guidelines that relate the number of scenarios to the desired level of probabilistic robustness and the effective decision dimension. Finally, while regularized DeePC has demonstrated strong empirical and theoretical performance in terms of robust trajectory prediction and implicit data-driven identification under stochastic disturbances, the explicit enforcement of constraint satisfaction under uncertainty has received less attention. The proposed Scenario-DeePC framework, therefore, shifts the focus from enhancing prediction robustness to systematically enhancing probabilistic constraint satisfaction, thereby complementing existing regularization-based approaches rather than replacing them.

The main contributions of this paper are as follows:
\begin{itemize}
    \item \textbf{Method:} We propose Scenario-DeePC, which integrates the scenario approach into DeePC by constructing the uncertainty description directly from data: empirical scenarios are formed from the controller's own closed-loop prediction errors, keeping the method distribution-free and consistent with the data-driven philosophy of DeePC. An adaptive variant collects scenarios online, so the description tracks changing noise, disturbances, and operating-point-dependent model mismatch.
    \item \textbf{Theory:} We establish the supporting theory, including a distribution-free probabilistic guarantee on constraint satisfaction, recursive feasibility of the receding-horizon scheme, and an offset-correction property that compensates systematic prediction bias.
    \item \textbf{Validation:} We validate Scenario-DeePC on a linear Boeing~747 model and a nonlinear two-tank system, where it significantly reduces constraint violations relative to standard DeePC while retaining comparable nominal tracking for the Boeing~747 and improved tracking accuracy in the nonlinear case.
\end{itemize}
The remainder of the paper is organized as follows:
Section~\ref{sec:Problem} outlines the problem statement. 
Sections~\ref{sec:DeePC} and~\ref{sec:Scenario} provide a brief overview of DeePC and the scenario approach to introduce the notation, respectively. 
Section~\ref{sec:Scenario-DeePC} presents the proposed Scenario-DeePC approach and discusses its theoretical results in Section~\ref{sec:Theo_Results}.
Section~\ref{sec:Num_Results} showcases the numerical results, and Section~\ref{sec:Conclusion} concludes the paper.

\section{Problem statement}
\label{sec:Problem}
Consider a discrete-time LTI system affected by disturbances and measurement noise
\begin{equation}
\begin{aligned}
x_{k+1} &= A x_k + B u_k \\
y_k &= C x_k + D u_k + d_k + v_k ,
\end{aligned}
\label{eq:system}
\end{equation}
where $x_k \in \mathbb{R}^{n_x}$ denotes the system state, $u_k \in \mathbb{R}^{n_u}$ the control input, and $y_k \in \mathbb{R}^{n_y}$ the measured output at time step $k$. The variables $d_k \in \mathbb{R}^{n_y}$ and $v_k \in \mathbb{R}^{n_y}$ represent the disturbance acting on the output and measurement noise, respectively.
The matrices $A \in \mathbb{R}^{n_x \times n_x}$, $B \in \mathbb{R}^{n_x \times n_u}$, $C \in \mathbb{R}^{n_y \times n_x}$, $D \in \mathbb{R}^{n_y \times n_u}$ describe the system dynamics in state space representation.

\section{Data-Enabled Predictive Control}
\label{sec:DeePC}
We briefly recall the DeePC formulation to fix the notation used in the remainder of the paper. DeePC constructs predictions directly from previously collected input-output trajectories via the behavioral representation of the system, rather than from a parametric model such as~\eqref{eq:system}.
Formally, a dynamical system $\Sigma$ is defined as a triple $\Sigma = (\mathbb{T}, \mathbb{W}, \mathcal{B})$, where $\mathbb{T} \subseteq \mathbb{R}$ denotes the time axis, $\mathbb{W}$ the signal space, and $\mathcal{B} \subseteq \mathbb{W}^{\mathbb{T}}$ the behavior of the system. The behavior $\mathcal{B}$ represents the set of all trajectories that are compatible with the system dynamics, where $\mathbb{W}^{\mathbb{T}}$ denotes the set of all mappings from the time axis $\mathbb{T}$ to the signal space $\mathbb{W}$.
For input-output systems, the signal space is typically given by $\mathbb{W} = \mathbb{U} \times \mathbb{Y}$, such that the behavior consists of all admissible input-output trajectories $(u,y)$. For deterministic LTI systems, Willems' fundamental lemma~\cite{Willems_2004} states that every trajectory of the system can be expressed as a linear combination of segments of a single measured trajectory, provided that the applied input signal is persistently exciting of sufficiently high order. Consequently, a sufficiently rich measured trajectory spans the entire behavior of the system.
To exploit this result, a previously measured input-output trajectory
\begin{equation}
u^d = \{u_1^d, \dots, u_{T_d}^d\}, \qquad
y^d = \{y_1^d, \dots, y_{T_d}^d\}
\label{eq:data_trajectory}
\end{equation}
is arranged into Hankel matrices of depth $L$
\begin{equation}
\mathcal{H}_L(u^d) =
\begin{bmatrix}
u_1^d & u_2^d & \dots & u_{T_d-L+1}^d \\
u_2^d & u_3^d & \dots & u_{T_d-L+2}^d \\
\vdots & \vdots & \ddots & \vdots \\
u_L^d & u_{L+1}^d & \dots & u_{T_d}^d
\end{bmatrix},
\label{eq:hankel_u}
\end{equation}
and analogously for $\mathcal{H}_L(y^d)$. 
In particular, the input sequence $u^d$ is said to be persistently exciting of order $L+n_x$ if the corresponding Hankel matrix satisfies~\eqref{eq:pe_rank}~\cite{Coulson_2019}.
\begin{equation}
\operatorname{rank}\big(\mathcal{H}_L(u^d)\big) = n_u L.
\label{eq:pe_rank}
\end{equation}
This guarantees that the collected trajectory spans all admissible input-output trajectories of length $L$, where $L= T_{\mathrm{ini}} + N$ is divided into a past horizon $T_{\text{ini}}$ and a future horizon $N$.
Accordingly, the Hankel matrices are partitioned as
\begin{equation}
\begin{bmatrix}
U_p \\
U_f
\end{bmatrix}
=
\mathcal{H}_L(u^d),
\qquad
\begin{bmatrix}
Y_p \\
Y_f
\end{bmatrix}
=
\mathcal{H}_L(y^d),
\label{eq:hankel_partition}
\end{equation}
where $U_p$ and $Y_p$ contain the past input-output trajectories of length $T_{\text{ini}}$, while $U_f$ and $Y_f$ describe the corresponding future trajectories over the prediction horizon $N$.
At each control step, the most recent input-output measurements $u_{\text{ini}}$ and $y_{\text{ini}}$ are used to select trajectories in the data that are consistent with the current system state.
The DeePC control problem is formulated as
\begin{subequations}\label{eq:DeePC}
\begin{align}
\min_{g,\,u,\,y,\,\sigma_y}\quad
& \|r - y\|_Q^2 + \|u\|_R^2 + \|g\|_{\lambda_g}^2 + \|\sigma_y\|_{\lambda_y}^2
\label{eq:DeePC_cost}\\
\text{s.t.}\quad
& \left(\begin{array}{c} U_p \\ Y_p \\ U_f \\ Y_f \end{array}\right) g
= \left(\begin{array}{c} u_{\mathrm{ini}} \\ y_{\mathrm{ini}} \\ u \\ y \end{array}\right)
+ \left(\begin{array}{c} 0 \\ \sigma_y \\ 0 \\ 0 \end{array}\right)
\label{eq:DeePC_data}\\
& u_k \in \mathcal{U}, \; y_k \in \mathcal{Y}, \quad \forall k \in \{0, \ldots, N-1\}.
\label{eq:DeePC_constraints}
\end{align}
\end{subequations}
Here, the vector $g \in \mathbb{R}^{T_d - T_{\mathrm{ini}} - N + 1}$ parameterizes admissible trajectories by expressing them as linear combinations of columns of the data Hankel matrices, while $r \in \mathbb{R}^{n_y N}$ denotes the desired reference trajectory over the prediction horizon. The matrices $Q \succ 0$ and $R \succ 0$ are weighting matrices that penalize tracking errors and control effort, respectively.
To account for noise and model mismatches arising from measurement errors or mild nonlinear system behavior, regularization terms are incorporated into the DeePC optimization problem. In particular, the penalties $\| g \|_{\lambda_g}^2$ and $\| \sigma_y \|_{\lambda_y}^2$ relax the strict trajectory consistency imposed by Willems' fundamental lemma and improve robustness with respect to noisy data~\cite{Coulson_2019, Huang_2023}. 
Input and output constraints are enforced through the admissible sets $\mathcal{U}$ and $\mathcal{Y}$, which define the feasible operating region of the system. Explicit theoretical guarantees on robustness and closed-loop stability of DeePC have been established in~\cite{Berberich_2020, Berberich_2024}.

\section{Scenario Approach}
\label{sec:Scenario}
The scenario approach, introduced by~\cite{Calafiore_2006} and further developed in~\cite{Campi_2008}, provides a principled framework for solving convex optimization problems subject to uncertain constraints. Consider a generic chance-constrained program of the form
\begin{equation}
\begin{aligned}
    &\min_{x} \; c(x) \quad \\
    &\text{s.t.} \quad 
    \mathbb{P}\bigl\{f(x, \delta) \leq 0\bigr\} \geq 1 - \varepsilon,
    \label{eq:chance}
\end{aligned}
\end{equation}
where the decision variable $x \in \mathcal{X} \subseteq \mathbb{R}^{n_{\text{opt}}}$ ranges over a convex feasible set $\mathcal{X}$, $c(x)$ is a convex cost function, $\delta \in \Delta$ is an uncertain parameter drawn from a probability distribution $\mathbb{P}$, $\varepsilon \in (0,1)$ is a prescribed violation tolerance, and $n_{\text{opt}}$ is the number of decision variables. Despite its natural formulation, problem~\eqref{eq:chance} is generally intractable, which motivates the use of scenario-based approximations~\cite{Campi_2008}.
The scenario approach addresses this intractability by replacing the probabilistic constraint with a finite set of $N_{\mathrm{scen}}$ independently sampled constraint instances, called scenarios. Given independent and identically distributed (i.i.d.)\ samples $\{\delta^{(i)}\}_{i=1}^{N_{\mathrm{scen}}} \sim \mathbb{P}$, the scenario program reads
\begin{equation}
\begin{aligned}
    &\min_{x} \; c(x) \quad \\
    & \text{s.t.} \quad
    f\bigl(x, \delta^{(i)}\bigr) \leq 0, \quad i = 1, \ldots, N_{\mathrm{scen}}.
    \label{eq:scenario}
\end{aligned}
\end{equation}
Problem~\eqref{eq:scenario} is a standard convex program solvable with off-the-shelf solvers. Its solution $x_{N_{\mathrm{scen}}}^\star$ carries rigorous probabilistic guarantees on constraint satisfaction for unseen realizations of $\delta$.
The fundamental result of \cite{Campi_2008} characterizes the violation probability 
\begin{equation}
    V(x_{N_{\mathrm{scen}}}^\star) = \mathbb{P}\{\delta : f(x_{N_{\mathrm{scen}}}^\star, \delta) > 0\}.
\end{equation}
For a confidence parameter $\beta \in (0,1)$, it holds that
\begin{equation}
    \mathbb{P}^{N_{\mathrm{scen}}}\bigl\{V(x_{N_{\mathrm{scen}}}^\star) > \varepsilon\bigr\} \leq 
    \sum_{i=0}^{n_{\text{opt}}-1} \binom{{N_{\mathrm{scen}}}}{i} \varepsilon^i (1-\varepsilon)^{{N_{\mathrm{scen}}}-i},
    \label{eq:scenario_bound}
\end{equation}
where $n_{\mathrm{opt}}$ is the number of optimization variables and $\mathbb{P}^{N_{\mathrm{scen}}}$ denotes the product measure induced by the i.i.d. sampling of the scenarios. A sufficient number of scenarios guaranteeing confidence $1-\beta$ is given by the explicit closed-form bound~\eqref{eq:sample_complexity}.
\begin{equation}
    {N_{\mathrm{scen}}} \geq \frac{2}{\varepsilon}\left(\ln\frac{1}{\beta} + n_{\mathrm{opt}}\right).
    \label{eq:sample_complexity}
\end{equation}
This bound is distribution-free: it holds for any $\mathbb{P}$, requires no structural assumptions on the uncertainty, such as Gaussianity or boundedness, and scales only logarithmically in $1/\beta$, making the confidence parameter inexpensive to tighten. The scenario approach has been successfully integrated with MPC to handle probabilistic constraints arising from stochastic disturbances~\cite{Schildbach_2014}, where each scenario corresponds to a sampled disturbance trajectory over the prediction horizon, yielding a tractable convex program with inherited feasibility guarantees.

\section{Scenario-DeePC}
\label{sec:Scenario-DeePC}
Consider the noise- and disturbance-affected system given in~\eqref{eq:system}. The objective is to control this system using the DeePC framework. While the regularization terms of DeePC in~\eqref{eq:DeePC} enhance robustness with respect to noise and disturbances in the implicit data-driven identification step, these do not explicitly account for uncertainty in the enforcement of output constraints.
In particular, measurement noise and external disturbances may lead to violations of the output constraints, even if the predicted trajectories satisfy the nominal constraints given in~\eqref{eq:DeePC_data}. To address this issue, we consider a chance-constrained formulation, where the output is required to satisfy the constraints with a prescribed probability level.
To obtain a tractable formulation, the chance constraints are approximated using the scenario approach. Specifically, $N_{\mathrm{scen}}$ scenarios are introduced to enforce the output being within the given set $\mathcal{Y}$ for each sampled realization. Furthermore, the scenario-dependent outputs $y^{(i)}$ are incorporated into the cost function to improve tracking performance under biased (non-zero-mean) uncertainties.
This results in the optimization problem in~\eqref{eq:Scenario-DeePC} that provides probabilistic guarantees on constraint satisfaction given in Section~\ref{sec:Scenario}, while remaining compatible with the DeePC framework of Section~\ref{sec:DeePC}.
\begin{subequations}\label{eq:Scenario-DeePC}
\begin{align}
\min_{g,\,u,\,y,\,\sigma_y}\quad
& \scalebox{0.8}{$\displaystyle
\frac{1}{N_{\mathrm{scen}}}\sum_{i=1}^{N_{\mathrm{scen}}}\|r - y^{(i)}\|_Q^2
+ \|u\|_R^2 + \|g\|_{\lambda_g}^2 + \|\sigma_y\|_{\lambda_y}^2
$}
\label{eq:ScDeePC_cost}\\
\text{s.t.}\quad
& \left(\begin{array}{c} U_{p} \\ Y_{p} \\ U_{f} \\ Y_{f} \end{array}\right) g
= \left(\begin{array}{c} u_{\mathrm{ini}} \\ y_{\mathrm{ini}} \\ u \\ y_{\mathrm{pred}} \end{array}\right)
+ \left(\begin{array}{c} 0 \\ \sigma_y \\ 0 \\ 0 \end{array}\right)
\label{eq:ScDeePC_data}\\
& y_k^{(i)} = y_{\mathrm{pred},k} + s_k^{(i)}
\label{eq:ScDeePC_scenario}\\
& u_k \in \mathcal{U},\quad y_k^{(i)} \in \mathcal{Y}
\label{eq:ScDeePC_constraints}\\
& \forall k \in \{0, \ldots, N-1\}\ \forall i \in \{1, \ldots, N_{\mathrm{scen}}\} \notag
\end{align}
\end{subequations}
Here $y_{\mathrm{pred}} \in \mathbb{R}^{n_y N}$ denotes the predicted output sequence of the data-driven predictor~\eqref{eq:ScDeePC_data}, and $y_{\mathrm{pred},k} \in \mathbb{R}^{n_y}$ its $k$-th step. Each scenario perturbs this prediction step-wise: $s_k^{(i)} \in \mathbb{R}^{n_y}$ is the single-step error applied at step $k$, and stacking these over the horizon yields the scenario vector $s^{(i)} := \bigl[\, s_0^{(i)}; \dots; s_{N-1}^{(i)} \,\bigr] \in \mathbb{R}^{n_y N}$, so that $y^{(i)} = y_{\mathrm{pred}} + s^{(i)} \in \mathbb{R}^{n_y N}$ is the output associated with the $i$-th scenario.
The single-step errors are drawn from a buffer of stored one-step prediction errors $\mathcal{E}_{\mathrm{buffer}} = \{w^{(1)}, \dots, w^{(N_{\mathrm{buffer}})}\}$, with each entry $w^{(j)} \in \mathbb{R}^{n_y}$. During each optimization step, each horizon step $k$ of each scenario $i$ draws independently and uniformly from this buffer, i.e.,
\begin{equation}
s_k^{(i)} \sim \mathrm{Unif}\!\left(\mathcal{E}_{\mathrm{buffer}}\right),
\quad i = 1,\dots,N_{\mathrm{scen}},\; k = 0,\dots,N-1,
\label{eq:scenario_sampling}
\end{equation}
where the sampling is performed uniformly at random.

\subsection{Data Collection for DeePC}
The data used to construct the Hankel matrices follows the standard DeePC procedure as described in Section~\ref{sec:DeePC}. In particular, an input sequence $u^d$ of ${T_d}$ discrete time steps is applied to the system in~\eqref{eq:system}, which is chosen to be persistently exciting of sufficiently high order. The corresponding output trajectory $y^d$ is obtained from the system and inherently reflects the presence of measurement noise and disturbances.
This data collection procedure reflects typical real-world settings, where measurements are inherently affected by noise and disturbances.

\subsection{Scenario Generation}
In contrast to classical robust or stochastic control approaches, no prior knowledge about the distribution of the measurement noise $v$ or the disturbance $d$ is assumed. This is in line with the data-driven philosophy of DeePC, where system properties are not modeled explicitly but inferred directly from data. Consequently, the uncertainty affecting the system is not described through an assumed modeled distribution, but constructed directly from observed data.
Therefore, we generate scenarios based on the prediction error of the DeePC controller during closed-loop operation. At each time step, DeePC computes an optimal control input sequence $u$, from which only the first control input is applied to the system, as in standard receding horizon control. Correspondingly, the data-driven predictor provides a predicted output sequence $y_{\mathrm{pred}}$ where the first output value $y_{\mathrm{pred},0}$ is extracted and compared to the measured system output $y$ obtained after applying the control input to the system at the current time step, resulting in the error $w^{(j)}$ given in~\eqref{eq:prediction_error}.
The discrepancy between prediction and measurement is then interpreted as a realization of the uncertainty affecting the system at the current time step.
\begin{equation}
w^{(j)} = y - y_{\mathrm{pred},0}.
\label{eq:prediction_error}
\end{equation}
In this sense, the scenario set directly represents the empirical distribution of the uncertainty acting on the closed-loop system.
Importantly, this prediction error $w^{(j)}$ represents the combined effect of external disturbances $d$, measurement noise $v$, and residual data-driven modeling inaccuracies arising from the noisy and disturbance-affected data in the Hankel matrix.
To make this composition explicit, recall that every length-$L$ column of the Hankel matrices is a system trajectory, so the output Hankel matrix admits the behavioral decomposition
\begin{equation} 
\mathcal{H}_L(y^d) \;=\; \Gamma_L\, X_0 \;+\; \mathcal{T}_L\, \mathcal{H}_L(u^d) \;+\; V, 
\label{eq:data_equation} 
\end{equation}
where $\Gamma_L = \bigl[\, C^\top \;\; (CA)^\top \;\; \cdots \;\; (CA^{L-1})^\top \,\bigr]^\top$ is the extended observability matrix, $\mathcal{T}_L$ is the lower-block-triangular Toeplitz matrix of Markov parameters $\{D,\, CB,\, CAB,\, \ldots\}$, the columns of $X_0$ collect the per-window initial states, and $V$ is the Hankel matrix of the combined disturbance and noise sequence $\{d_k + v_k\}$. Its future block specializes to 
\begin{equation}
    Y_f = \Gamma_N X_f + \mathcal{T}_N U_f + V_f, 
\end{equation}
so the data-driven prediction reads
\begin{equation} 
y_\text{pred} \;=\; Y_f\, g \;=\; \Gamma_N\, \xi_g \;+\; \mathcal{T}_N\, u \;+\; V_f\, g, 
\label{eq:pred_decomp} 
\end{equation}
where $\xi_g = X_f\, g$ is the effective initial state selected by $g$ through the past-block constraints $U_p g = u_\text{ini}$ and $Y_p g = y_\text{ini} + \sigma_y$, and $X_f$ collects the states at the start of the future window of each Hankel column. Under noise-free data and persistency of excitation, Willems' fundamental lemma yields $\xi_g = x$ and $V_f = 0$, so~\eqref{eq:pred_decomp} reduces to the exact predictor, and the prediction error vanishes.

In the noise-affected setting of~\eqref{eq:system}, neither condition holds. Since the realized output satisfies 
\begin{equation}
    y = \Gamma_N x + \mathcal{T}_N u + (d + v),
\end{equation} 
and the planned input $u$ cancels between measurement and prediction, the prediction error~\eqref{eq:prediction_error} decomposes as
\begin{equation}
\begin{aligned}
y - y_{\mathrm{pred}} &= \underbrace{(d + v)}_{\text{disturbance}\,+\,\text{noise}} - \underbrace{\Gamma_N\bigl(\xi_g - x\bigr) - V_f\, g}_{\text{data-driven identification residual}}, \\[4pt]
w^{(j)} &= \bigl[\, y - y_{\mathrm{pred}} \,\bigr]_{k=0} \in \mathbb{R}^{n_y},
\end{aligned}
\label{eq:error_decomp}
\end{equation}
where $[\cdot]_{k=0}$ denotes the first $n_y$-block (the one-step-ahead entry) of the stacked prediction error. The first term is the instantaneous disturbance and measurement noise. The second is a residual produced by the noise embedded in the Hankel data itself. The noisy past block 
\begin{equation}
    Y_p = \Gamma_{T_\text{ini}} X_0 + \mathcal{T}_{T_\text{ini}} \mathcal{H}_{T_\text{ini}}(u^d) + V_p
\end{equation}
biases the reconstructed state $\xi_g$, while $V_f g$ propagates the future-block noise directly into the prediction. Crucially, the slack $\sigma_y$ enters~\eqref{eq:error_decomp} only through $\xi_g$. Therefore, it relaxes the past-output matching but cannot de-noise the Hankel columns, and the regularizer $\|g\|_{\lambda_g}^2$ shrinks but does not eliminate this residual~\cite{Dorfler_2022}. A prediction error therefore persists on top of $(d+v)$ regardless of the slacking and regularization, and for nonlinear systems an additional model-mismatch term enters through the LTI parametrization. This composite residual, and not the measurement noise alone, is precisely the uncertainty the controller must be robustified against, which motivates collecting the scenarios directly from the observed prediction errors~\eqref{eq:prediction_error} rather than from an assumed model of the sensor noise.
While regularization in DeePC mitigates these effects strongly in the prediction step, it does not eliminate them entirely. By directly using the observed prediction error as a scenario, the proposed approach captures the true uncertainty acting on the system in a fully data-driven manner, without relying on simplifying assumptions or conservative approximations.

The collection of such scenarios follows a procedure analogous to the data collection in standard DeePC. Instead of operating the system in open loop as in standard DeePC, the closed-loop system consisting of the DeePC controller and the physical system is run along a predefined reference trajectory. During this phase, both the predicted outputs $y_{\mathrm{pred},0}$ and the corresponding measured outputs $y$ are recorded. The resulting prediction errors, computed according to~\eqref{eq:prediction_error}, are stored in the scenario buffer $\mathcal{E}_{\mathrm{buffer}}$. Consecutive prediction errors are temporally correlated through the closed-loop dynamics, whereas the probabilistic guarantee requires independent samples. We therefore retain only every $(M+1)$-th error, where $M$ is the correlation horizon, leaving consecutive buffer entries more than $M$ closed-loop steps apart and hence independent. This $M$-dependence, and the guarantee it enables, are made precise in Section~\ref{sec:Theo_Results}.
During operation, the Scenario-DeePC controller samples $N_{\mathrm{scen}}$ scenarios from this buffer as described previously. These randomly sampled prediction errors are then incorporated into the constraint formulation, enabling a scenario-based approximation of the underlying chance constraints. \\
In the perturbation~\eqref{eq:ScDeePC_scenario}, each prediction step $k$ is perturbed by its own sampled one-step scenario $s_k^{(i)}$, so that scenario $i$ is the horizon-length sequence $(s_0^{(i)}, \dots, s_{N-1}^{(i)})$ assembled from independent draws of~\eqref{eq:scenario_sampling}. Drawing the per-step blocks independently from the empirical one-step marginal $\hat{\mathbb{P}}_w := \mathrm{Unif}(\mathcal{E}_{\mathrm{buffer}})$ makes the $N$ horizon steps of $s^{(i)}$ mutually independent, each distributed as $\hat{\mathbb{P}}_w$, so the scenario program, and hence the guarantee of Theorem~\ref{thm:scenario_guarantee}, robustifies against this product measure over the horizon rather than against the joint distribution of consecutive prediction errors. Because the controller re-optimizes at every step, the closed-loop collection observes only the one-step-ahead error~\eqref{eq:prediction_error}. A multi-step error trajectory under a frozen input plan is never realized and hence cannot be collected. The per-step marginal is thus the only distributional information available, so assembling each scenario from independent per-step draws is the natural choice.
This formulation ensures that the predicted outputs remain within the admissible set $\mathcal{Y}$ for all considered realizations of the uncertainty, thereby providing a tractable approximation of the underlying chance constraints.

\subsection{Adaptive Scenario-DeePC}
Since the scenarios are constructed from prediction errors observed during closed-loop operation in a receding horizon fashion, it is natural to update the scenario set online while the controller is running. Therefore, a replay buffer $\mathcal{E}_{\mathrm{buffer}}$ of size $N_{\mathrm{buffer}}$ is maintained, which stores the most recent prediction error samples.
Every $(M+1)$-th closed-loop step, a new prediction error is computed according to~\eqref{eq:prediction_error} and added to the buffer, while the oldest stored error is discarded. This results in a regularly updated set of scenario realizations that reflects the current operating conditions of the system.
Such an online scenario generation mechanism enables the controller to stay robust to changes in the system and its environment, including variations in measurement noise, the occurrence of new disturbances, or data-driven model mismatch due to system degradation, and can improve the practical robustness of constraint handling. In particular, increasing prediction errors are directly captured in the scenario set, leading to an automatic adjustment of the constraint tightening through the scenario-based formulation, while decreasing prediction errors result in less conservative constraint tightening.
\begin{remark}
A non-uniform sampling of the replay buffer, giving higher weight to recent scenarios, could improve adaptation to fast-changing disturbances. However, such adaptive sampling strategies are beyond the scope of this work, and uniform sampling is used throughout.  
\end{remark}

\section{Theoretical Results}
\label{sec:Theo_Results}
We first establish that Scenario-DeePC is a well-posed convex program and quantify its decision dimension, which underpins the probabilistic guarantee derived in the remainder of this section.
Throughout this section, we make the standard standing assumptions of data-driven predictive control.
\begin{assumption}[Data-generating system and excitation]
\label{ass:standing}
The offline data $(u^d, y^d)$ is generated by a minimal (controllable and observable) LTI system~\eqref{eq:system} of order $n_x$, where $n_x$ may be replaced by a known upper bound. The initial horizon satisfies $T_{\mathrm{ini}} \ge n_x$, and the input $u^d$ is persistently exciting of order $T_{\mathrm{ini}} + N + n_x$, so that the rank condition~\eqref{eq:pe_rank} holds.
\end{assumption}
\subsection{{Well-Posedness and Decision Dimension}}
\begin{assumption}[Convex constraint sets]
\label{ass:convex_sets}
The input and output constraint sets $\mathcal{U} \subseteq \mathbb{R}^{n_u}$ and $\mathcal{Y} \subseteq \mathbb{R}^{n_y}$ are convex and closed.
\end{assumption}

\begin{lemma}[Convexity and solvability]
\label{lem:convexity}
Let Assumption~\ref{ass:convex_sets} hold with $Q \succ 0$, $R \succ 0$ and $\lambda_g, \lambda_y > 0$. Then the Scenario-DeePC problem~\eqref{eq:Scenario-DeePC} is a convex quadratic program, and whenever it is feasible it attains a global minimizer.
\end{lemma}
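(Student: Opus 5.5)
We prove the two claims of Lemma~\ref{lem:convexity} separately: convexity of the feasible set and objective, then attainment via coercivity.

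\textbf{Convexity.} Once the scenarios $s^{(i)}$ are sampled they are fixed data, so~\eqref{eq:ScDeePC_scenario} is affine in $y_{\mathrm{pred}}$. We eliminate $y^{(i)} = y_{\mathrm{pred}} + s^{(i)}$. The data constraint~\eqref{eq:ScDeePC_data} is a finite set of linear equalities in $(g,u,y_{\mathrm{pred}},\sigma_y)$, so it defines an affine subspace. Each constraint $u_k\in\mathcal{U}$ and $y_{\mathrm{pred},k}+s_k^{(i)}\in\mathcal{Y}$ is the preimage of a closed convex set under an affine map, hence closed and convex. A finite intersection of closed convex sets is closed and convex, so the feasible set $\mathcal{F}$ is closed and convex.

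The objective is a sum of terms $\|Mz - c\|_W^2$ with $W\succeq 0$ and $z$ the stacked decision vector, so it is a convex quadratic function. Strictly speaking the problem is a convex QP when $\mathcal{U},\mathcal{Y}$ are polyhedral, and a convex program with quadratic objective otherwise. We will state this qualification, in line with the paper's usage.

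\textbf{Attainment.} Let $J$ denote the objective and assume $\mathcal{F}\neq\emptyset$. We show $J$ is coercive on $\mathcal{F}$:
\begin{itemize}
\item $\lambda_g>0$ gives $J \ge \lambda_g\|g\|^2$.
\item $\lambda_y>0$ gives $J \ge \lambda_y\|\sigma_y\|^2$.
\item $R\succ0$ gives $J \ge \lambda_{\min}(R)\|u\|^2$.
\item On $\mathcal{F}$, $y_{\mathrm{pred}}=Y_f g$ is linearly bounded by $\|g\|$, and so are the $y^{(i)}$, since the shifts $s^{(i)}$ are fixed.
\end{itemize}
Alternatively, $Q\succ0$ bounds $\|y^{(i)}\|$ directly. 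Either way, every feasible sequence along which $\|z\|\to\infty$ must have $J\to\infty$.

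Now pick a feasible $z_0$. The sublevel set $\{z\in\mathcal{F}: J(z)\le J(z_0)\}$ is closed, because $\mathcal{F}$ is closed and $J$ is continuous, and bounded by coercivity, hence compact. By Weierstrass, $J$ attains its minimum on this set, and that point is a global minimizer over $\mathcal{F}$. Convexity further makes every local minimizer global.

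\textbf{Main obstacle.} The delicate point is coercivity in the variables that are not directly penalized, namely $y_{\mathrm{pred}}$ and $y^{(i)}$. We resolve it by bounding them through the equality $y_{\mathrm{pred}}=Y_fg$, which ties them linearly to the penalized variable $g$. The only other subtlety is the terminological one about the QP label for non-polyhedral sets, handled above.
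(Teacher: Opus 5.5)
Your proposal is correct and follows essentially the same route as the paper. Closedness and convexity of the feasible set come from affine equalities intersected with affine preimages of $\mathcal{U}$ and $\mathcal{Y}$, the objective is a convex quadratic in which the scenarios enter only as constant offsets, and attainment follows from coercivity driven by $\lambda_g>0$ together with the linear tie $y_{\mathrm{pred}}=Y_f g$. The paper phrases that last step as eliminating $u$, $y_{\mathrm{pred}}$, $\sigma_y$ to obtain a coercive reduced cost in $g$, whereas you argue with compact sublevel sets in the full variable.

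The paper additionally notes that the objective is strictly convex (it is a sum of strictly convex terms in disjoint blocks), which gives uniqueness; the lemma does not require this. Your caveat that ``QP'' strictly requires polyhedral $\mathcal{U},\mathcal{Y}$ is a fair refinement that the paper leaves implicit.
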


\begin{proof}
Collect the decision variables in $z := (g, u, y_{\mathrm{pred}}, \sigma_y)$.

\emph{Objective:} With $y^{(i)} = y_{\mathrm{pred}} + s^{(i)}$, the scenarios enter the cost~\eqref{eq:ScDeePC_cost} only as constant offsets. Each term $\|r - y^{(i)}\|_Q^2$, $\|u\|_R^2$, $\|g\|_{\lambda_g}^2$ and $\|\sigma_y\|_{\lambda_y}^2$ is then a convex quadratic in $z$. The first is convex by $Q \succ 0$, composed with the affine map $y_{\mathrm{pred}} \mapsto y_{\mathrm{pred}} + s^{(i)} - r$, the second by $R \succ 0$, and the last two by $\lambda_g, \lambda_y > 0$. Their sum~\eqref{eq:ScDeePC_cost} is therefore convex. Since the four terms are strictly convex in $y_{\mathrm{pred}}$, $u$, $g$ and $\sigma_y$ respectively, which are disjoint components of $z$, the objective is strictly convex. Its minimizer, when attained, is unique.

\emph{Feasible set:} The equality constraints~\eqref{eq:ScDeePC_data}, are affine in $z$ and define an affine subspace. By~\eqref{eq:ScDeePC_constraints}, $u_k \in \mathcal{U}$ and $y_{\mathrm{pred},k} \in \mathcal{Y} - s^{(i)}_k$ for all $i,k$. Under Assumption~\ref{ass:convex_sets}, these are intersections of convex closed sets, so the feasible set, their intersection with the affine subspace, is thus convex and closed.

\emph{Solvability:} Eliminating $u, y_{\mathrm{pred}}, \sigma_y$ via~\eqref{eq:ScDeePC_data} leaves a reduced cost in $g$ that is coercive, as $\|g\|_{\lambda_g}^2 \to \infty$ for $\|g\| \to \infty$ with $\lambda_g > 0$. A coercive, strictly convex quadratic over a nonempty closed convex set attains a unique global minimizer.
\end{proof}

\begin{lemma}[Support-dimension bound]
\label{lem:support_dim}
The number of support constraints of the Scenario-DeePC problem~\eqref{eq:Scenario-DeePC} is at most
\begin{equation}
n_{\mathrm{opt}} = T_d - T_{\mathrm{ini}} - N + 1,
\label{eq:nopt}
\end{equation}
the dimension of its only free decision variable $g$.
\end{lemma}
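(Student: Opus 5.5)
The plan is to show that, after the equality constraints~\eqref{eq:ScDeePC_data} are eliminated, the whole program~\eqref{eq:Scenario-DeePC} becomes a convex program in $g$ alone. The standard Calafiore--Campi fact then finishes the argument: for a convex scenario program with a unique solution in $\mathbb{R}^d$, the number of support constraints is at most $d$~\cite{Calafiore_2006, Campi_2008}.

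\textbf{Step 1: reduce to $g$.} From~\eqref{eq:ScDeePC_data} we have
\[
u = U_f g, \qquad y_{\mathrm{pred}} = Y_f g, \qquad \sigma_y = Y_p g - y_{\mathrm{ini}},
\]
together with the residual equality $U_p g = u_{\mathrm{ini}}$. The first three are affine functions of $g$, and the last is an affine constraint on $g$ that does not depend on the scenarios. The problem is therefore equivalent to minimizing a reduced cost $J(g)$ over the convex set $\mathcal{G} := \{g : U_p g = u_{\mathrm{ini}},\ U_f g \in \mathcal{U}^N\}$, subject to the scenario constraints
\[
[Y_f g]_k + s_k^{(i)} \in \mathcal{Y}, \qquad i = 1,\dots,N_{\mathrm{scen}},\ k = 0,\dots,N-1 .
\]
For each $i$, these are convex in $g$ by Assumption~\ref{ass:convex_sets}. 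We group all $k$ belonging to one scenario $i$ into a single convex constraint $f(g,s^{(i)}) \le 0$, so that the scenario program has exactly the form~\eqref{eq:scenario}, with decision space $\mathbb{R}^{T_d - T_{\mathrm{ini}} - N + 1}$.

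\textbf{Step 2: uniqueness.} $J$ is strictly convex in $g$ because of the term $\|g\|_{\lambda_g}^2$ with $\lambda_g>0$; the other terms are convex after composition with affine maps. By Lemma~\ref{lem:convexity}, the reduced program therefore has a unique minimizer $g^\star$ whenever it is feasible. This uniqueness is what makes the notion of a support constraint well defined: a constraint is of support if removing it changes $g^\star$.

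\textbf{Step 3: Helly-type bound.} This is the main step. We would argue by contradiction. Suppose there were $d+1$ support constraints, where $d = n_{\mathrm{opt}}$. Removing the $j$-th of them yields a minimizer $g_j$ with $J(g_j) < J(g^\star)$; the inequality is strict because the minimizer is unique and, by assumption, changes. Each $g_j$ satisfies every constraint except possibly the $j$-th. Now take the convex sublevel sets $\{g \in \mathcal{G} : J(g) \le J(g^\star) - \eta\}$ intersected with each scenario constraint set, for a small $\eta > 0$. Helly's theorem in $\mathbb{R}^d$ then produces a point that satisfies all constraints and has cost strictly below $J(g^\star)$, which contradicts optimality.

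An alternative is simply to cite \cite[Thm.~2]{Calafiore_2006}, which states this bound for convex programs with a unique solution, after checking that its hypotheses hold. The delicate points are the following. First, the bound depends on the dimension of $g$ and not on the total number of variables $(g,u,y_{\mathrm{pred}},\sigma_y)$. This is exactly why the elimination in Step 1 is required: the remaining variables are affine images of $g$ and add no degrees of freedom. Second, the deterministic constraints (the $U_p$ equality and the input constraints) are not scenario constraints, so they belong inside $\mathcal{G}$ rather than being counted.

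As a final remark, the affine restriction $U_p g = u_{\mathrm{ini}}$ lowers the effective dimension even further. The stated value $n_{\mathrm{opt}}$ is therefore a valid, if conservative, upper bound.
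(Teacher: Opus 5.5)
Your proposal is correct and follows the paper's route. Like the paper, you eliminate $u$, $y_{\mathrm{pred}}$ and $\sigma_y$ as affine images of $g$, treat $U_p g = u_{\mathrm{ini}}$ as a certain constraint that can only lower the count, and invoke the scenario-theory bound on support constraints in $\dim(g)=n_{\mathrm{opt}}$; your explicit uniqueness check via $\lambda_g>0$ is a hypothesis of that bound which the paper supplies only implicitly through Lemma~\ref{lem:convexity}. If you spell out the Helly step, apply Helly's theorem to the sublevel set $\{g\in\mathcal{G} : J(g)\le J(g^\star)-\eta\}$, with $\eta := J(g^\star)-\max_j J(g_j)>0$, together with the individual scenario sets, not to the sublevel set intersected with each scenario set: then any $d+1$ members share a point, namely $g^\star$ if the sublevel set is not among them, and otherwise $g_j$ for a support constraint $j$ that was left out.
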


\begin{proof}
The equality constraints~\eqref{eq:ScDeePC_data} express every remaining variable as an affine image of $g$,
\begin{equation}
u = U_f g, \qquad y_{\mathrm{pred}} = Y_f g, \qquad \sigma_y = Y_p g - y_{\mathrm{ini}},
\end{equation}
while $g$ itself is confined to the affine set $\{\, g : U_p g = u_{\mathrm{ini}} \,\}$. In particular, the slack $\sigma_y$ is not an independent variable: the equality $Y_p g = y_{\mathrm{ini}} + \sigma_y$ pins it to the past-output residual, so its penalty
\begin{equation}
\|\sigma_y\|_{\lambda_y}^2 = \|Y_p g - y_{\mathrm{ini}}\|_{\lambda_y}^2
\end{equation}
is a regularizer on $g$ rather than a degree of freedom of its own, and the same holds for $u$ and $y_{\mathrm{pred}}$. Hence $g \in \mathbb{R}^{\,T_d - T_{\mathrm{ini}} - N + 1}$ is the only free decision variable of~\eqref{eq:Scenario-DeePC}, and the certain constraint $U_p g = u_{\mathrm{ini}}$ can only lower the count further. Since the uncertain constraints~\eqref{eq:ScDeePC_scenario}--\eqref{eq:ScDeePC_constraints} act on the decision solely through $y_{\mathrm{pred}} = Y_f g$, the scenario theory~\cite{Campi_2008} bounds the number of support constraints of the convex program by its number of decision variables, here $\dim(g) = n_{\mathrm{opt}}$.
\end{proof}
By Lemmas~\ref{lem:convexity} and~\ref{lem:support_dim}, the Scenario-DeePC problem~\eqref{eq:Scenario-DeePC} is a convex program whose support dimension is at most $n_{\mathrm{opt}}$, so the scenario approach~\cite{Campi_2008} applies: it bounds, distribution-free, the probability that the optimal solution violates the output constraint on a realization drawn from the empirical error buffer. Its classical form presumes i.i.d.\ scenarios from a fixed distribution. In Scenario-DeePC, the scenarios are instead the empirical prediction errors~\eqref{eq:prediction_error} gathered during closed-loop operation, which are temporally correlated through the closed-loop dynamics. We therefore impose an $M$-dependence assumption on the error process and recover independent scenarios by retaining only every $(M+1)$-th error, after which the classical scenario bound applies directly.

\subsection{{Probabilistic Constraint Satisfaction}}
\begin{assumption}[$M$-dependence of the prediction-error process]
\label{ass:mixing}
After an initial transient, the prediction-error process $\{w^{(j)}\}$ defined in~\eqref{eq:prediction_error} is stationary and $M$-dependent~\cite{Bosq_1998} for some $M \in \mathbb{N}$: for every $\ell$, the past $\{w^{(j)} : j \leq \ell\}$ and the future $\{w^{(j)} : j \geq \ell + M + 1\}$ are independent. Equivalently, errors separated by more than $M$ closed-loop steps are mutually independent.
\end{assumption}

\begin{theorem}[Probabilistic constraint satisfaction]
\label{thm:scenario_guarantee}
Let Assumptions~\ref{ass:convex_sets} and~\ref{ass:mixing} hold, and let the buffer be populated with every $(M+1)$-th closed-loop prediction error. For $\varepsilon, \beta \in (0,1)$, the optimal Scenario-DeePC solution satisfies
\begin{equation}
\begin{aligned}
& \mathbb{P}^{N_{\mathrm{scen}}}\!\left\{ V > \varepsilon \right\} \;\leq\; \beta, \\[4pt]
& V := \mathbb{P}_{s}\!\left\{ \exists\, k \in \{0,\dots,N-1\} : y_{\mathrm{pred},k}^\star + s_k \notin \mathcal{Y} \right\},
\end{aligned}
\end{equation}
whenever the number of scenarios obeys the standard bound~\eqref{eq:sample_complexity} with $n_{\mathrm{opt}}$ from~\eqref{eq:nopt}, where $V$ is the probability, under the empirical buffer distribution, that a sampled prediction error drives the predicted output outside $\mathcal{Y}$. These $N_{\mathrm{scen}}$ scenarios are drawn from a buffer of $N_{\mathrm{buffer}} \geq N_{\mathrm{scen}}$ independent errors, and filling the buffer at stride $M+1$ requires
\begin{equation}
N_{\mathrm{cl}} \;\geq\; (M+1)\,N_{\mathrm{buffer}}
\label{eq:sample_complexity_mixing}
\end{equation}
consecutive closed-loop steps, where $N_{\mathrm{cl}}$ is the number of closed-loop steps recorded.
\end{theorem}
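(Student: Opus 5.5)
The plan is to reduce the statement to the classical scenario bound \eqref{eq:scenario_bound}--\eqref{eq:sample_complexity} by checking its three hypotheses: convexity, a bounded support dimension, and i.i.d.\ scenarios from a fixed measure. The argument is conditional on the buffer. We fix the buffer contents $\mathcal{E}_{\mathrm{buffer}}$ together with the current initial data $(u_{\mathrm{ini}},y_{\mathrm{ini}})$, and we define the uncertainty parameter as the horizon-length vector $\delta := s = (s_0,\dots,s_{N-1})$. Its distribution is $\mathbb{P}_s := \hat{\mathbb{P}}_w^{\otimes N}$, the $N$-fold product of $\hat{\mathbb{P}}_w = \mathrm{Unif}(\mathcal{E}_{\mathrm{buffer}})$. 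By the sampling rule \eqref{eq:scenario_sampling}, the $N_{\mathrm{scen}}$ scenarios $s^{(1)},\dots,s^{(N_{\mathrm{scen}})}$ are i.i.d.\ draws from $\mathbb{P}_s$, because every block is drawn independently and uniformly. This holds regardless of how the buffer was produced. Conditional on the buffer, the i.i.d.\ requirement therefore holds exactly, and $\mathbb{P}^{N_{\mathrm{scen}}}$ is to be read as the product measure $\mathbb{P}_s^{N_{\mathrm{scen}}}$.

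Next we write the problem in the form \eqref{eq:scenario}. Following Lemma~\ref{lem:support_dim}, we eliminate $u,y_{\mathrm{pred}},\sigma_y$ in favour of $g$. The $i$-th scenario constraint then becomes a single constraint $f(g,s^{(i)}) \le 0$, namely $Y_f^{(k)} g + s^{(i)}_k \in \mathcal{Y}$ for all $k$. This is convex in $g$ by Assumption~\ref{ass:convex_sets}, since it is the preimage of a closed convex set under an affine map, intersected over $k$. The remaining constraints $U_p g = u_{\mathrm{ini}}$ and $U_f g \in \mathcal{U}^N$ are deterministic and convex. The cost, however, depends on the scenarios through the averaged tracking term, whereas the scenario theory of \cite{Campi_2008} assumes a deterministic objective. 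To deal with this, we use the epigraph reformulation $\min t$ subject to $\mathrm{cost}(g;s^{(1:N_{\mathrm{scen}})}) \le t$. Alternatively, we note that expanding the square gives $\frac{1}{N_{\mathrm{scen}}}\sum_i \|r-y_{\mathrm{pred}}-s^{(i)}\|_Q^2 = \|r - y_{\mathrm{pred}} - \bar s\|_Q^2 + \mathrm{const}$, with $\bar s$ the sample mean.

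The cost dependence on $\bar s$ is the step I expect to be the main obstacle. Since the objective depends on the sample through its mean, the problem is not literally a scenario program with a fixed cost. I would handle it by appealing to the generalized scenario framework of \cite{Campi_2008} (see also \cite{Calafiore_2006}), whose bound rests only on two facts. The first is uniqueness of the minimizer, which Lemma~\ref{lem:convexity} provides. The second is that the solution is determined by at most $n_{\mathrm{opt}}$ support constraints. Because the cost is a permutation-invariant function of the sample, the standard exchangeability argument still applies. If this turns out to be too delicate, the fallback is to state the guarantee for the feasibility part, where the cost is treated as fixed given $\bar s$. With Lemma~\ref{lem:support_dim} bounding the support dimension by $n_{\mathrm{opt}}$ from \eqref{eq:nopt}, the violation probability satisfies \eqref{eq:scenario_bound}. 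Choosing $N_{\mathrm{scen}}$ according to \eqref{eq:sample_complexity} makes the binomial tail at most $\beta$, using the usual Chernoff estimate, which yields $\mathbb{P}^{N_{\mathrm{scen}}}\{V>\varepsilon\}\le\beta$.

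Finally, we justify the buffer-size statement using Assumption~\ref{ass:mixing}. Suppose the errors are retained at indices $j_0, j_0+(M+1), j_0+2(M+1),\dots$. By $M$-dependence, each retained error is independent of all earlier retained ones, and by induction the retained errors are mutually independent. By stationarity they are identically distributed. The buffer is therefore an i.i.d.\ sample of the one-step marginal, which is what legitimizes interpreting $\hat{\mathbb{P}}_w$ as an empirical version of the true error law. Collecting $N_{\mathrm{buffer}}$ such errors requires $(M+1)N_{\mathrm{buffer}}$ consecutive closed-loop steps, which gives \eqref{eq:sample_complexity_mixing}. The condition $N_{\mathrm{buffer}}\ge N_{\mathrm{scen}}$ is a design requirement and is not needed for the conditional bound.
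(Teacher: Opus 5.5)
Your skeleton matches the paper's proof: convexity (Lemma~\ref{lem:convexity}), the support-dimension count (Lemma~\ref{lem:support_dim}), the classical bound of \cite{Campi_2008}, and the stride count for \eqref{eq:sample_complexity_mixing}. Conditioning on the buffer is a correct and cleaner way to secure the i.i.d.\ hypothesis than the paper's route, which goes through independence of the buffer entries. Sampling with replacement from a fixed buffer already yields i.i.d.\ draws from $\hat{\mathbb{P}}_w^{\otimes N}$. Assumption~\ref{ass:mixing} matters only for the step count and for what the empirical law represents.

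\textbf{The gap.} It is the step you flag yourself, and your patch does not close it.
\begin{itemize}
\item \emph{Permutation invariance is not enough.} The classical bound needs the minimizer to be reproduced when the program is re-solved with at most $n_{\mathrm{opt}}$ of the scenarios (a support, or compression, set). Lemma~\ref{lem:support_dim} supplies this only through a Helly-type argument that presupposes a scenario-independent objective. Here the objective contains $\bar s$, so deleting any scenario, active or not, shifts $\bar s$ and generically moves the minimizer. Every scenario can then be a support constraint, and the $n_{\mathrm{opt}}$ count is lost. For instance, a symmetric cost could select the vertex of the sampled feasible set with the largest violation probability; that violation probability is not controlled by \eqref{eq:scenario_bound}.
\item \emph{The epigraph reformulation does not help.} It creates one constraint coupling all samples, which is not of the form $f(x,\delta^{(i)})\le 0$. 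Splitting it into per-scenario epigraph variables adds $N_{\mathrm{scen}}$ dimensions and makes the bound vacuous.
\item \emph{The fallback fails too.} Conditioning on $\bar s$ destroys independence: given their mean, the scenarios are no longer i.i.d.\ draws from $\mathbb{P}_s$.
\end{itemize}

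The paper's proof also passes over this point. It applies Lemma~\ref{lem:support_dim} and \cite{Campi_2008} as if the cost were scenario-free, so you need an actual argument here. Two routes work:
\begin{itemize}
\item \emph{Changing the program.} Replace $\bar s$ in the cost by the buffer mean $\mathbf{1}_N\otimes\frac{1}{N_{\mathrm{buffer}}}\sum_j w^{(j)}$, which is deterministic given the buffer, or by an independent batch of draws. Your proof then goes through verbatim.
\item \emph{Keeping the program, for box-shaped $\mathcal{Y}$} (as in \eqref{eq:relaxed_set} and both examples). Let $B$ be the componentwise bounding box of the $N_{\mathrm{scen}}$ samples. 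Every $y_{\mathrm{pred}}$ feasible for all scenarios satisfies $V\le\mathbb{P}_s\{s\notin B\}$. $B$ is the unique solution of a fixed-cost scenario program in $2n_yN$ variables. The certificate therefore holds for any cost whenever $2n_yN\le n_{\mathrm{opt}}$.
\end{itemize}
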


\begin{proof}
By Assumption~\ref{ass:mixing}, prediction errors separated by more than $M$ steps are independent. Retaining every $(M+1)$-th error therefore yields buffer entries that are mutually independent, and stationarity makes them identically distributed. The scenarios drawn as in~\eqref{eq:scenario_sampling} are thus an i.i.d.\ sample of the empirical buffer distribution. By Lemma~\ref{lem:convexity} the program~\eqref{eq:Scenario-DeePC} is convex and by Lemma~\ref{lem:support_dim} its support dimension is at most $n_{\mathrm{opt}}$, so the classical scenario approach~\cite{Campi_2008} applies and yields $\mathbb{P}^{N_{\mathrm{scen}}}\{V > \varepsilon\} \leq \beta$ under~\eqref{eq:sample_complexity}. Since one error is retained every $M+1$ steps, filling the buffer with $N_{\mathrm{buffer}}$ independent errors consumes $(M+1)\,N_{\mathrm{buffer}}$ consecutive closed-loop steps. 
\end{proof}

Theorem~\ref{thm:scenario_guarantee} certifies constraint satisfaction with respect to the empirical distribution of the collected prediction-error buffer, without any assumption on the underlying error distribution $\mathbb{P}$. This finite-sample, distribution-free certificate is the intended guarantee of the data-driven scenario formulation. It holds for the scenarios actually observed, and the buffer represents the operating conditions more faithfully as more prediction errors are collected.
The bound~\eqref{eq:sample_complexity} scales linearly in $n_{\mathrm{opt}}$ and logarithmically in $1/\beta$, so the confidence level is inexpensive to tighten. The $M$-dependence enters only through the $(M+1)$ factor in the data-collection horizon~\eqref{eq:sample_complexity_mixing}, not in the guarantee itself. The $M$-dependent setting is also treated in~\cite{Campi_2009_2}. In DeePC, $n_{\mathrm{opt}}$ is governed by the number of Hankel columns and is therefore large, rendering the bound considerably more conservative than in model-based scenario MPC~\cite{Campi_2009}. The numerical results in Section~\ref{sec:Num_Results} show that substantially fewer scenarios already yield satisfactory constraint satisfaction.

\begin{remark}[Mixing in practice]
\label{rem:mixing}
Assumption~\ref{ass:mixing} is reasonable: for a well-performing predictor, the prediction error is dominated by the time-independent measurement noise, while the stable closed loop forgets past noise geometrically, so its correlation decays with the lag, and treating the errors as approximately independent is standard in the scenario literature~\cite{Campi_2021}. The horizon $M$ follows from the empirical autocorrelation of the buffered errors (Section~\ref{sec:Num_Results}) or the closed-loop settling time, and since~\eqref{eq:sample_complexity_mixing} is monotone in $M$, overestimating it stays valid. In the adaptive variant stationarity fails, so Theorem~\ref{thm:scenario_guarantee} does not apply. Formal guarantees are left to future work, though the empirical results still show robust constraint satisfaction.
\end{remark}

\subsection{{Recursive Feasibility}}
The Scenario-DeePC problem in~\eqref{eq:Scenario-DeePC} enforces the output constraints $y_k^{(i)} \in \mathcal{Y}$ for all scenarios $i$ and all prediction steps $k$. When the prediction errors are large, or when the system is driven close to the constraint boundary, no feasible solution to~\eqref{eq:Scenario-DeePC} may exist, and the receding-horizon implementation can break down. To recover \textit{feasibility by construction}, we introduce a non-negative slack variable $h \in \mathbb{R}^{n_y}_{\geq 0}$ that relaxes the output constraint set component-wise:
\begin{equation}
    \mathcal{Y}(h)
    \;:=\;
    \bigl\{
        y \in \mathbb{R}^{n_y}
        \;\big|\;
        \underline{y} - h \;\leq\; y \;\leq\; \bar{y} + h
    \bigr\},
    \label{eq:relaxed_set}
\end{equation}
where $\underline{y}$ and $\bar{y}$ are the original lower and upper output bounds defining $\mathcal{Y}$.
The relaxed Scenario-DeePC problem is then
\begin{subequations}\label{eq:ScDeePC_relaxed}
\begin{align}
\min_{g,\,u,\, y,\,\sigma_y,\,h}\quad
& \begin{aligned}[t]
    & \frac{1}{N_\text{scen}} \sum_{i=1}^{N_\text{scen}} \|r - y^{(i)}\|_Q^2
      + \|u\|_R^2 + \|g\|_{\lambda_g}^2 \\
    & \quad + \|\sigma_y\|_{\lambda_y}^2 + \mu \|h\|_1
  \end{aligned}
\label{eq:ScDeePC_relaxed_cost}\\
\text{s.t.}\quad
& \begin{bmatrix} U_p \\ Y_p \\ U_f \\ Y_f \end{bmatrix} g
= \begin{bmatrix} u_\text{ini} \\ y_\text{ini} \\ u \\ y_\text{pred} \end{bmatrix}
+ \begin{bmatrix} 0 \\ \sigma_y \\ 0 \\ 0 \end{bmatrix}
\label{eq:ScDeePC_relaxed_data}\\
& y_k^{(i)} = y_{\text{pred},k} + s_k^{(i)}
\label{eq:ScDeePC_relaxed_scenario}\\
& h \geq 0 \quad \\
& u_k \in \mathcal{U} \quad y_k^{(i)} \in \mathcal{Y}(h)
\label{eq:ScDeePC_relaxed_constraints}\\
& \forall\, k \in \{0,\ldots,N-1\}\;
\forall\, i \in \{1,\ldots,N_\text{scen}\}, \notag
\end{align}
\end{subequations}
where $\mu > 0$ is a penalty weight. The $\ell_1$ form makes it an exact penalty, so for $\mu$ sufficiently large the slack vanishes whenever the unrelaxed problem is feasible (Proposition~\ref{prop:recursive_feasibility}(iii)).
\begin{proposition}[Recursive feasibility of relaxed Scenario-DeePC]
\label{prop:recursive_feasibility}
Consider the relaxed Scenario-DeePC problem~\eqref{eq:ScDeePC_relaxed} with $\mu > 0$, implemented in a receding-horizon fashion. The following statements hold:
\begin{enumerate}[(i)]
    \item \emph{Feasibility by construction:} Problem~\eqref{eq:ScDeePC_relaxed} is feasible at every time step $t$, for any realization of the scenario set $\{s^{(i)}\}_{i=1}^{N_\text{scen}}$ and any initial condition $(u_\text{ini}, y_\text{ini})$.

    \item \emph{Recursive feasibility:} The receding-horizon implementation of relaxed Scenario-DeePC is recursively feasible: if~\eqref{eq:ScDeePC_relaxed} is feasible at time step $t$, it is feasible at every subsequent time step $t' > t$.

    \item \emph{Reduction to original Scenario-DeePC:} For $\mu$ sufficiently large, whenever the original Scenario-DeePC problem~\eqref{eq:Scenario-DeePC} is feasible (i.e., there exists a solution satisfying $y_k^{(i)} \in \mathcal{Y}$ for all $i, k$), the optimal slack satisfies $h^\star = 0$, and problem~\eqref{eq:ScDeePC_relaxed} coincides with~\eqref{eq:Scenario-DeePC}.
\end{enumerate}
\end{proposition}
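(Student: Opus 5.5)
The plan is to prove the three statements in order. Statement (ii) follows immediately from (i), and (iii) follows from a standard exact-penalty argument applied to the convex program of Lemma~\ref{lem:convexity}.

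\emph{Step 1 (feasibility by construction).} I will build a feasible point explicitly. By Assumption~\ref{ass:standing}, $u^d$ is persistently exciting of order $T_{\mathrm{ini}}+N+n_x$, so the rank condition~\eqref{eq:pe_rank} makes $\bigl[\begin{smallmatrix} U_p \\ U_f \end{smallmatrix}\bigr]$ full row rank. Hence for any $u_{\mathrm{ini}}$ and any input sequence $\bar u$ with $\bar u_k \in \mathcal{U}$ (this assumes $\mathcal{U}\neq\emptyset$), there is a $g$ with $U_p g = u_{\mathrm{ini}}$ and $U_f g = \bar u$. Given this $g$, set $\sigma_y := Y_p g - y_{\mathrm{ini}}$ and $y_{\mathrm{pred}} := Y_f g$, so that~\eqref{eq:ScDeePC_relaxed_data} holds. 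The scenario outputs $y^{(i)}_k = y_{\mathrm{pred},k}+s^{(i)}_k$ then form a finite set of $N N_{\mathrm{scen}}$ vectors. I will choose $h$ componentwise as
\[
h_j := \max\Bigl\{0,\ \max_{i,k}\bigl(y^{(i)}_{k,j}-\bar y_j\bigr),\ \max_{i,k}\bigl(\underline y_j - y^{(i)}_{k,j}\bigr)\Bigr\}.
\]
This choice is finite and nonnegative, and it places every $y^{(i)}_k$ in $\mathcal{Y}(h)$. The construction does not depend on the scenario realization or on $(u_{\mathrm{ini}},y_{\mathrm{ini}})$, which gives (i). Existence of a minimizer then follows as in Lemma~\ref{lem:convexity}: the added term $\mu\|h\|_1$ with $h\ge 0$ is coercive in $h$, and the problem remains convex because $\mathcal{Y}(h)$ is jointly convex in $(y,h)$.

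\emph{Step 2 (recursive feasibility).} Since (i) holds at every $t'$ regardless of the new $(u_{\mathrm{ini}},y_{\mathrm{ini}})$ produced by the closed loop and regardless of the resampled scenarios, (ii) follows with no candidate-shifting argument.

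\emph{Step 3 (exact penalty), which I expect to be the main obstacle.} I will eliminate variables as in Lemma~\ref{lem:support_dim}, writing the original problem~\eqref{eq:Scenario-DeePC} in $g$ alone with the output constraints $\underline y \le Y_f g + s^{(i)} \le \bar y$, which are linear. Let $g^\star$ be its minimizer, which exists and is unique by Lemma~\ref{lem:convexity}. I will then invoke KKT multipliers $\lambda^\star \ge 0$ for these linear inequalities. For this I will assume $\mathcal{U}$ is polyhedral (or that Slater's condition holds), so that the constraint qualification is satisfied. The standard exact-penalty theorem for convex programs (the $\ell_1$ result of Fletcher, as used in soft-constrained MPC by Kerrigan and Maciejowski) then says the following. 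If $\mu$ exceeds the dual norm of the aggregated multipliers, concretely $\mu > \max_j \sum_{i,k}\bigl(\lambda^{\star,+}_{ikj}+\lambda^{\star,-}_{ikj}\bigr)$ because a single $h_j$ relaxes all $i,k$ for component $j$, then $(g^\star, h=0)$ minimizes the relaxed problem. By strict convexity in $g$, together with the strict penalty increase for any $h\neq 0$ when $\mu$ is strictly larger than this bound, it is the unique minimizer. This gives $h^\star=0$, and the two problems coincide.

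The delicate points are twofold. First, the constraint qualification needs an explicit extra assumption. Second, the threshold on $\mu$ depends on the realized scenarios and on $(u_{\mathrm{ini}},y_{\mathrm{ini}})$. I would therefore state (iii) pointwise, per problem instance. If a uniform $\mu$ is desired, I would try to bound the multipliers uniformly over a compact set of initial conditions and over the finite empirical buffer.
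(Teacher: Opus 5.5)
Your proposal is correct and matches the paper's proof step for step: (i) uses an explicit feasible point built from the full row rank of $[U_p;U_f]$ and the componentwise max-slack $h$; (ii) is read off directly from (i); and (iii) uses the same aggregated-multiplier threshold $\mu > \max_j \sum_{i,k}(\bar\lambda^{(i)}_{k,j}+\underline\lambda^{(i)}_{k,j})$ together with the exact $\ell_1$-penalty result. Your version is also somewhat more careful than the paper's: you take the multipliers from the unrelaxed problem, state the constraint qualification explicitly (e.g.\ polyhedral $\mathcal{U}$), choose $\bar u\in\mathcal{U}$ so that the input constraints hold, and note that the threshold on $\mu$ depends on the problem instance.
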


\begin{proof}
We decompose the feasibility of~\eqref{eq:ScDeePC_relaxed} into two subproblems: the data-equality constraint~\eqref{eq:ScDeePC_relaxed_data} inherited from standard DeePC, and the scenario output constraints~\eqref{eq:ScDeePC_relaxed_scenario}~-~\eqref{eq:ScDeePC_relaxed_constraints} introduced by the scenario approach.

\emph{(i) Feasibility by construction:}

\emph{Subproblem~1 (data equality constraints):} For any fixed $(u_\text{ini}, y_\text{ini})$ and any decision input $u$, the persistency-of-excitation condition~\eqref{eq:pe_rank} guarantees that the input Hankel block $\bigl[\begin{smallmatrix} U_p U_f \end{smallmatrix}\bigr]^\top = \mathcal{H}_L(u^d)$ has full row rank, so the input equalities $U_p g = u_\text{ini}$ and $U_f g = u$ admit a solution $g$. The slack $\sigma_y$ renders the past-output equality $Y_p g = y_\text{ini} + \sigma_y$ solvable for any such $g$, and $y_\text{pred} = Y_f g$ is then determined. Hence, the data-equality block is feasible at every time step and for every scenario realization, a property inherited directly from standard DeePC~\cite{Coulson_2019, Willems_2004}. We further note that, since the output data is noise-affected, the stacked Hankel matrix $[U_p;\,Y_p;\,U_f;\,Y_f]$ is generically of full row rank, so the data-equality block is solvable for any right-hand side even without $\sigma_y$. Measurement noise, which generally degrades prediction accuracy, therefore does not obstruct the feasibility of the data-equality subproblem.

\emph{Subproblem~2 (scenario output constraints):} Given any such $g$, the prediction $y_\text{pred}$ and the scenario outputs $y^{(i)} = y_\text{pred} + s^{(i)}$ are fixed. Choosing the slack component-wise as
\begin{equation}
    h \;=\; \max_{i,\,k}\;
    \Bigl(
        \bigl(y_k^{(i)} - \bar{y}\bigr)^+
        \;\vee\;
        \bigl(\underline{y} - y_k^{(i)}\bigr)^+
    \Bigr)
    \;\geq\; 0,    
\end{equation}
where $(\cdot)^+$ and $\vee$ denote the component-wise positive part and maximum, renders every scenario constraint $y_k^{(i)} \in \mathcal{Y}(h)$ satisfied. Since $h \geq 0$ and $\mu > 0$, the objective is finite. Combining both subproblems, a feasible point of~\eqref{eq:ScDeePC_relaxed} exists for any initial condition and any scenario set, establishing~(i).

\emph{(ii) Recursive feasibility:} By~(i), problem~\eqref{eq:ScDeePC_relaxed} is feasible for arbitrary $(u_\text{ini}, y_\text{ini})$ and arbitrary scenario sets. Feasibility is therefore not merely propagated from one step to the next but holds unconditionally. Recursive feasibility follows as an immediate corollary, since the initial condition formed at every closed-loop step and the scenario set drawn from $\mathcal{E}_\text{buffer}$ are admissible instances of~(i).

\emph{(iii) Reduction to original Scenario-DeePC:}
When~\eqref{eq:Scenario-DeePC} is feasible, its minimizer is feasible for~\eqref{eq:ScDeePC_relaxed} with $h = 0$. It remains to show the penalty forces $h^\star = 0$. Collect the decision variables in $z = (g, u, y, \sigma_y, h)$ and recall~\eqref{eq:relaxed_set}. Dualizing the relaxed output constraints of~\eqref{eq:ScDeePC_relaxed} component-wise, with multipliers $\bar\lambda^{(i)}_{k,j} \ge 0$ for the upper bound $y^{(i)}_{k,j} \le \bar{y}_j + h_j$ and $\underline\lambda^{(i)}_{k,j} \ge 0$ for the lower bound $y^{(i)}_{k,j} \ge \underline{y}_j - h_j$, gives the partial Lagrangian
\begin{equation}
\scalebox{0.85}{$\displaystyle
\mathcal{L} \!=\! J(z) \!+\! \sum_{i=1}^{N_{\mathrm{scen}}} \sum_{k=0}^{N-1} \sum_{j=1}^{n_y} \Bigl[ \bar\lambda^{(i)}_{k,j}\bigl( y^{(i)}_{k,j} \!-\! \bar{y}_j \!-\! h_j \bigr) \!+\! \underline\lambda^{(i)}_{k,j}\bigl( \underline{y}_j \!-\! h_j \!-\! y^{(i)}_{k,j} \bigr) \Bigr],
$}
\label{eq:lagrangian}
\end{equation}
where $J(z)$ is the cost~\eqref{eq:ScDeePC_relaxed_cost}, containing the penalty $\mu\sum_j h_j$. The single slack $h_j$ enters both bounds, so collecting its terms in~\eqref{eq:lagrangian} yields the contribution $(\mu - \Lambda_j)\,h_j$, with the aggregate multiplier
\begin{equation}
\Lambda_j \;:=\; \sum_{i=1}^{N_{\mathrm{scen}}} \sum_{k=0}^{N-1} \bigl( \bar\lambda^{(i)}_{k,j} + \underline\lambda^{(i)}_{k,j} \bigr) \;\ge\; 0 .
\label{eq:agg_multiplier}
\end{equation}
Minimized over $h_j \ge 0$, this contribution attains $h_j^\star = 0$ whenever $\mu \ge \Lambda_j$, and the $\ell_1$ penalty is thus exact~\cite{Kerrigan_2000}. Hence for every $\mu > \bar\mu := \max_j \Lambda_j$, the optimal slack vanishes, $h^\star = 0$, and~\eqref{eq:ScDeePC_relaxed} coincides with~\eqref{eq:Scenario-DeePC}. 
\end{proof}
The optimal slack $h^\star$ at each time step quantifies the minimum relaxation of the output constraint set required for feasibility given the current scenario realization. When $h^\star = 0$, the original output constraints are satisfied for all scenarios and the relaxed formulation is equivalent to~\eqref{eq:Scenario-DeePC}. When $h^\star > 0$, the constraint set is automatically widened by the smallest amount necessary, trading strict constraint satisfaction for guaranteed solvability. With the $\ell_1$ penalty, this trade-off is exact in $\mu$: once $\mu$ exceeds a finite threshold, $h^\star = 0$ whenever the unrelaxed problem is feasible, so the relaxation activates only when strictly necessary, while smaller $\mu$ permits earlier relaxation to reduce conservatism in the presence of large prediction errors. This behavior is analogous to the slack-variable relaxation employed in scenario-based stochastic MPC~\cite{Deori_2017}, adapted here to the data-driven DeePC setting.

Proposition~\ref{prop:recursive_feasibility} establishes feasibility by construction: feasibility of the data-equality block is inherited from standard DeePC under persistency of excitation, while the slack $h$ absorbs any infeasibility introduced by the scenario tightening of the output set. This differs from classical recursive feasibility results in tube-based or terminal-set MPC~\cite{Mayne_2000}, where feasibility of the original (unrelaxed) constraints is propagated forward using a terminal invariant set. We deliberately do not employ terminal ingredients. Although data-based terminal-set constructions have been proposed~\cite{Berberich_2020}, they require additional design (e.g., data-based LMIs) together with a bound on the noise, which would conflict with the distribution- and bound-free philosophy underlying Scenario-DeePC.

\begin{corollary}[Retention of the probabilistic guarantee]
\label{cor:retained_guarantee}
Whenever the unrelaxed problem~\eqref{eq:Scenario-DeePC} is feasible (equivalently, $h^\star = 0$), the relaxed solution coincides with the scenario solution and the probabilistic guarantee of Theorem~\ref{thm:scenario_guarantee} for the original output set $\mathcal{Y}$ is retained with complexity $n_{\mathrm{opt}}$. When $h^\star > 0$, the same guarantee applies to the widened set $\mathcal{Y}(h^\star)$ in place of $\mathcal{Y}$, with complexity at most $n_{\mathrm{opt}} + n_y$. Therefore, the slack never invalidates the probabilistic certificate. It only relocates it from $\mathcal{Y}$ to the minimally enlarged set $\mathcal{Y}(h^\star)$ required for feasibility at the current step.
\end{corollary}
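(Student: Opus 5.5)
The corollary splits into the two cases of its statement, $h^\star = 0$ and $h^\star > 0$.

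\emph{Case $h^\star = 0$.} Here I would invoke Proposition~\ref{prop:recursive_feasibility}(iii). For $\mu$ above the exact-penalty threshold $\bar\mu$, feasibility of~\eqref{eq:Scenario-DeePC} forces $h^\star = 0$. The minimizer of~\eqref{eq:ScDeePC_relaxed} then restricts to a feasible point of~\eqref{eq:Scenario-DeePC} with the same cost. Conversely, any feasible point of~\eqref{eq:Scenario-DeePC} extends with $h = 0$ to a feasible point of~\eqref{eq:ScDeePC_relaxed} with equal cost. By the uniqueness in Lemma~\ref{lem:convexity}, the two solutions coincide, so $y^\star_{\mathrm{pred}}$ is the scenario solution. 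Theorem~\ref{thm:scenario_guarantee} then applies verbatim with $\mathcal{Y}$ and with $n_{\mathrm{opt}}$ taken from Lemma~\ref{lem:support_dim}.

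\emph{Case $h^\star > 0$.} I would treat~\eqref{eq:ScDeePC_relaxed} as a scenario program in its own right, with decision vector $(g, h)$. The steps are as follows.
\begin{enumerate}[(a)]
\item Show convexity. The added term $\mu\|h\|_1$ is convex. The constraint $\underline y - h \le y_{\mathrm{pred},k} + s^{(i)}_k \le \bar y + h$ is jointly affine in $(g,h)$ because $y_{\mathrm{pred}} = Y_f g$. The constraint $h \ge 0$ is deterministic.
\item Bound the support dimension. Repeating the argument of Lemma~\ref{lem:support_dim}, the free variables are $g$ and $h$, so the number of support constraints is at most $n_{\mathrm{opt}} + n_y$.
\item Apply the classical bound~\cite{Campi_2008}. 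This controls the violation probability of the uncertain constraint family at the optimizer $(g^\star, h^\star)$. That event is exactly $\exists k : y^\star_{\mathrm{pred},k} + s_k \notin \mathcal{Y}(h^\star)$. It yields $\mathbb{P}^{N_{\mathrm{scen}}}\{V_{h^\star} > \varepsilon\} \le \beta$ once $N_{\mathrm{scen}}$ satisfies~\eqref{eq:sample_complexity} with $n_{\mathrm{opt}} + n_y$ in place of $n_{\mathrm{opt}}$.
\end{enumerate}
The i.i.d.\ property of the scenarios carries over unchanged from the proof of Theorem~\ref{thm:scenario_guarantee} under Assumption~\ref{ass:mixing}.

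\emph{Main obstacle.} The delicate point is that $\mathcal{Y}(h^\star)$ is itself random, since it depends on the samples. Conditioning on the realized $h^\star$ would break independence. I would avoid this by never conditioning. The scenario theorem is applied to the joint decision $(g,h)$, and it already certifies the random solution, including its random $h$-component. Only the event "new sample violates the constraint parametrized by the optimizer" is bounded, so no conditioning is needed.

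A further subtlety is the need for a unique optimizer, or a tie-break rule, in the scenario theorem. The cost is only convex, not strictly convex, in $h$. I would handle this with the standard convention of~\cite{Campi_2008} of a fixed tie-break, for instance picking the minimal-norm $h$.

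Finally, the two cases combine into a single certificate. This is valid because the $n_{\mathrm{opt}} + n_y$ bound also covers the $h^\star = 0$ case, where $\mathcal{Y}(0) = \mathcal{Y}$.
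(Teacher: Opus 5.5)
Your proposal is correct and follows the paper's proof. The case $h^\star = 0$ goes through Proposition~\ref{prop:recursive_feasibility}(iii) and Theorem~\ref{thm:scenario_guarantee} with $n_{\mathrm{opt}}$, and the case $h^\star > 0$ adds $h$ to the free variables of Lemma~\ref{lem:support_dim} to get support dimension at most $n_{\mathrm{opt}} + n_y$; your explicit convexity check, certifying the joint decision $(g,h)$ without conditioning on the random $h^\star$, and the unified $n_{\mathrm{opt}}+n_y$ certificate are sound refinements of the paper's terse argument. (The tie-break is not actually needed: for fixed $g$, the penalty $\mu\|h\|_1$ with $\mu>0$ pins $h$ to the componentwise maximal violation, and the reduced cost is strictly convex in $g$ since $\lambda_g>0$, so the relaxed minimizer is unique.)
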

\begin{proof}
For $h^\star = 0$, Proposition~\ref{prop:recursive_feasibility}(iii) makes the relaxed problem~\eqref{eq:ScDeePC_relaxed} coincide with~\eqref{eq:Scenario-DeePC}, so Theorem~\ref{thm:scenario_guarantee} applies verbatim with support dimension $n_{\mathrm{opt}}$. For $h^\star > 0$, the relaxed program adds the slack $h \in \mathbb{R}^{n_y}_{\geq 0}$ to the free decision variables, so by the support-dimension argument of Lemma~\ref{lem:support_dim} its support dimension is at most $n_{\mathrm{opt}} + n_y$. Theorem~\ref{thm:scenario_guarantee} then applies with $n_{\mathrm{opt}}$ replaced by $n_{\mathrm{opt}} + n_y$, certifying the same probabilistic guarantee for the relaxed constraint, i.e.\ $\mathbb{P}_{s}\!\left\{ \exists\, k \in \{0,\dots,N-1\} : y_{\mathrm{pred},k}^\star + s_k \notin \mathcal{Y}(h^\star) \right\} \leq \varepsilon$ with confidence $1-\beta$.
\end{proof}

\subsection{{Consistency with Standard DeePC}}
\begin{theorem}[Equivalence of Scenario-DeePC and DeePC under LTI conditions]
\label{thm:zero_error_equivalence}
Consider the LTI system~\eqref{eq:system} with $d_k = 0$ and $v_k = 0$ for all $k$, and let the data underlying the Hankel matrices be persistently exciting of sufficient order in the sense of~\eqref{eq:pe_rank}. Then the prediction error~\eqref{eq:prediction_error} vanishes, $w^{(j)} = 0$ for all collected samples $j \in \{1, \dots, N_{\mathrm{buffer}}\}$, the scenario buffer is $\mathcal{E}_{\mathrm{buffer}} = \{0\}$, and the Scenario-DeePC problem~\eqref{eq:Scenario-DeePC} reduces exactly to the standard DeePC problem~\eqref{eq:DeePC}.
\end{theorem}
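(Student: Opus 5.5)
The argument has three steps: show the prediction error is zero, conclude the buffer and scenarios are zero, and then check that the two optimization problems coincide.

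\emph{Step 1 (vanishing prediction error).} With $d_k = v_k = 0$, the Hankel noise term in~\eqref{eq:data_equation} is $V = 0$, so $V_f = 0$ and $V_p = 0$. Under Assumption~\ref{ass:standing} with $T_{\mathrm{ini}} \ge n_x$, Willems' fundamental lemma applies. Every $g$ satisfying $U_p g = u_{\mathrm{ini}}$ and $Y_p g = y_{\mathrm{ini}}$ then produces a trajectory $(U_p g, Y_p g, U_f g, Y_f g)$ of the true system. Observability over $T_{\mathrm{ini}} \ge n_x$ steps forces its state at the start of the future window to coincide with the true current state, i.e.\ $\xi_g = x$.

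\emph{Step 2 (the slack is zero).} The claim $\xi_g = x$ requires $\sigma_y = 0$ at the optimum. In the noise-free closed loop, $(u_{\mathrm{ini}}, y_{\mathrm{ini}})$ is a genuine system trajectory, so by the fundamental lemma there exists $g$ with $Y_p g = y_{\mathrm{ini}}$ exactly. I expect this to be the main obstacle, because the regularizer $\|g\|_{\lambda_g}^2$ can make it cheaper to accept a small nonzero $\sigma_y$ in exchange for a smaller $\|g\|$. I would handle it in one of two ways. The first is to argue in the limit $\lambda_y \to \infty$, or equivalently with the hard constraint $\sigma_y = 0$, which is the standard idealization in noise-free DeePC. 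The second avoids the issue altogether: even when $\sigma_y \neq 0$, the scenarios are still zero, so the reduction in Step 4 holds for any fixed $g$. Once $\sigma_y = 0$, the decomposition~\eqref{eq:error_decomp} gives $y - y_{\mathrm{pred}} = 0 - \Gamma_N(x - x) - 0 = 0$. Taking the first block gives $w^{(j)} = 0$ for every collected $j$.

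\emph{Step 3 (trivial buffer and scenarios).} Since every stored error is zero, $\mathcal{E}_{\mathrm{buffer}} = \{0\}$. The sampling rule~\eqref{eq:scenario_sampling} then yields $s_k^{(i)} = 0$ almost surely for all $i,k$, so $y^{(i)} = y_{\mathrm{pred}}$ for every scenario.

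\emph{Step 4 (reduction of the program).} Substituting $y^{(i)} = y_{\mathrm{pred}}$ into~\eqref{eq:Scenario-DeePC}:
\begin{itemize}
\item The averaged cost collapses to $\frac{1}{N_{\mathrm{scen}}}\sum_i \|r - y_{\mathrm{pred}}\|_Q^2 = \|r - y_{\mathrm{pred}}\|_Q^2$.
\item The $N_{\mathrm{scen}}$ copies of the constraint $y_k^{(i)} \in \mathcal{Y}$ all become the single constraint $y_{\mathrm{pred},k} \in \mathcal{Y}$.
\item The data equality~\eqref{eq:ScDeePC_data} is identical to~\eqref{eq:DeePC_data} after renaming $y_{\mathrm{pred}}$ to $y$.
\end{itemize}
The two problems therefore have the same feasible set and the same objective. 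By the strict convexity in Lemma~\ref{lem:convexity}, they also have the same unique minimizer, so Scenario-DeePC reduces exactly to~\eqref{eq:DeePC}, which proves the statement.

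If Step~2 cannot be fully closed, this last observation still settles the reduction of the optimization problem. The remaining gap is then only whether the recorded errors are exactly zero, and that holds under exact past matching.
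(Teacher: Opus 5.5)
Your Steps 1, 3 and 4 follow the paper's proof essentially verbatim. The paper also reads off $w^{(j)}=0$ from \eqref{eq:error_decomp} using $d+v=0$, $V_f=0$ and $\xi_g=x$, concludes $\mathcal{E}_{\mathrm{buffer}}=\{0\}$, and collapses the averaged cost and the $N_{\mathrm{scen}}$ identical output constraints under the identification $y=y_{\mathrm{pred}}$.

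Your Step 2 addresses a point the paper's proof skips: it asserts $\xi_g=x$ directly from persistency of excitation and Willems' lemma. Your concern is justified. Write $x_{\mathrm{ini},g}$ and $x_{\mathrm{ini}}$ for the initial states of the selected and of the true past window. In the noise-free case, $U_p g=u_{\mathrm{ini}}$ gives
\[
\sigma_y=\Gamma_{T_{\mathrm{ini}}}\bigl(x_{\mathrm{ini},g}-x_{\mathrm{ini}}\bigr),\qquad
\xi_g-x=A^{T_{\mathrm{ini}}}\bigl(x_{\mathrm{ini},g}-x_{\mathrm{ini}}\bigr).
\]
So $w^{(j)}=C(x-\xi_g)$ is exactly zero only when $\sigma_y^\star=0$. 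With $\lambda_g>0$ and $\lambda_y<\infty$, the optimizer generically accepts a small fictitious shift of the initial state in exchange for lower tracking and $\|g\|$ cost, so the recorded errors are small but nonzero. Your first resolution is therefore the right one, but it is an extra hypothesis: exact past matching in the controller that collects the errors, which the paper uses implicitly. State it explicitly. Also, $\lambda_y\to\infty$ is not equivalent to the hard constraint here: it only gives $w^{(j)}\to 0$, not the exact identity $\mathcal{E}_{\mathrm{buffer}}=\{0\}$ that an ``exact'' reduction needs. Once the buffer is $\{0\}$, your Step 4 holds for any $\lambda_y$ in the Scenario-DeePC program itself.

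Your second resolution, and the closing paragraph built on it, are circular and should be dropped. The scenarios are not zero for any independent reason; they are drawn from the buffer of recorded errors. When $\sigma_y^\star\neq 0$, those errors are the generically nonzero quantities $C(x-\xi_g)$ above. Step 4 rests on Step 3, which rests on Step 2. If Step 2 is left open, the reduction of the program is not settled: without exact past matching the buffer is only approximately $\{0\}$, and Scenario-DeePC is only approximately DeePC.
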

\begin{proof}
By the prediction-error decomposition~\eqref{eq:error_decomp}, the full prediction error is $y - y_{\mathrm{pred}} = (d+v) - \Gamma_N(\xi_g - x) - V_f\, g$. Under $d_k = v_k = 0$ the disturbance-and-noise term vanishes, $d+v = 0$, and the data Hankel carries no noise, $V_f = 0$. Persistency of excitation~\eqref{eq:pe_rank} together with Willems' fundamental lemma~\cite{Willems_2004} makes the data-driven predictor reconstruct the true state, $\xi_g = x$. Hence $y - y_{\mathrm{pred}} = 0$, and in particular its first block
\begin{equation}
\label{eq:proof_zero_scenario}
w^{(j)} = \bigl[\, y - y_{\mathrm{pred}} \,\bigr]_{k=0} = 0
\end{equation}
for every collected sample $j$, and the scenario buffer is $\mathcal{E}_{\mathrm{buffer}} = \{w^{(1)}, \ldots, w^{(N_{\mathrm{buffer}})}\} = \{0, \ldots, 0\}$.
Any scenario drawn from this buffer according to~\eqref{eq:scenario_sampling} therefore satisfies $s_k^{(i)} = 0$ for all $i \in \{1, \dots, N_{\mathrm{scen}}\}$ and $k \in \{0, \dots, N-1\}$, so the stacked scenarios vanish, $s^{(i)} = 0$.
In~\eqref{eq:Scenario-DeePC}, the scenario-perturbed outputs reduce to
\begin{equation}
\label{eq:proof_scenario_output_collapse}
y^{(i)} = y_{\mathrm{pred}} + s^{(i)} = y_{\mathrm{pred}},
\quad \forall i \in \{1, \dots, N_{\mathrm{scen}}\}.
\end{equation}
The averaged scenario tracking cost therefore becomes
\begin{equation}
\label{eq:proof_cost_collapse}
\frac{1}{N_{\mathrm{scen}}} \sum_{i=1}^{N_{\mathrm{scen}}}
\|r - y^{(i)}\|_Q^2
= \|r - y_{\mathrm{pred}}\|_Q^2,
\end{equation}
which coincides with the tracking cost $\|r - y\|_Q^2$ in the standard DeePC problem~\eqref{eq:DeePC_cost} under the identification $y = y_{\mathrm{pred}}$.
The scenario output constraints in~\eqref{eq:ScDeePC_constraints} require $y^{(i)}_k \in \mathcal{Y}$ for all $i$ and $k$. Since $y^{(i)}_k = y_{\mathrm{pred},k}$ for all $i$ by \eqref{eq:proof_scenario_output_collapse}, the $N_{\mathrm{scen}}$ identical constraints reduce to the single constraint
\begin{equation}
\label{eq:proof_constraint_collapse}
y_{\mathrm{pred},k} \in \mathcal{Y}, \quad k = 0, \dots, N-1,
\end{equation}
which is precisely the output constraint $y_k \in \mathcal{Y}$~\eqref{eq:DeePC_constraints} of standard DeePC.
The remaining terms are identical in both formulations. Hence, problem~\eqref{eq:Scenario-DeePC} reduces exactly to problem~\eqref{eq:DeePC}.
\end{proof}

\begin{assumption}[Zero-mean prediction error process]
\label{ass:zero_mean_process}
The prediction error process $\{w^{(j)}\}$ generating the scenario buffer $\mathcal{E}_{\mathrm{buffer}}$, defined in~\eqref{eq:prediction_error} and capturing the combined effect of external disturbances, measurement noise, and residual data-driven modeling inaccuracies, has zero mean in the sense that, as $N_{\mathrm{buffer}} \to \infty$,
\begin{equation}
\label{eq:zero_mean_process}
\bar w \;:=\; \lim_{N_{\mathrm{buffer}} \to \infty}
\frac{1}{N_{\mathrm{buffer}}} \sum_{j=1}^{N_{\mathrm{buffer}}} w^{(j)}
\;=\; 0.
\end{equation}
\end{assumption}

\begin{assumption}[Interior operation]
\label{ass:interior}
The system operates sufficiently far from the output constraint boundaries that the scenario output constraints $y^{(i)}_k = y_{\mathrm{pred},k} + s^{(i)}_k \in \mathcal{Y}$ are inactive for every sampled scenario sequence $s^{(i)}$ and every step $k$ of the prediction horizon.
\end{assumption}

\begin{theorem}[Asymptotic equivalence under zero-mean scenarios]
\label{thm:zero_mean_equivalence}
Let Assumptions~\ref{ass:zero_mean_process} and~\ref{ass:interior} hold. Then, in the iterated limit $N_{\mathrm{scen}} \to \infty$ followed by $N_{\mathrm{buffer}} \to \infty$, the Scenario-DeePC cost function converges to the standard DeePC cost function up to a constant, and both problems yield the same optimal solution.
\end{theorem}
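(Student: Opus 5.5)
The plan is to expand the averaged scenario tracking cost in~\eqref{eq:ScDeePC_cost} algebraically, separating the part that depends on the decision variables from the part that depends only on the scenarios, and then pass to the iterated limit term by term. Writing $e := r - y_{\mathrm{pred}}$, so that $r - y^{(i)} = e - s^{(i)}$, the expansion is
\begin{equation*}
\frac{1}{N_{\mathrm{scen}}}\sum_{i=1}^{N_{\mathrm{scen}}}\|e - s^{(i)}\|_Q^2
= \|e\|_Q^2 - 2\, e^\top Q\, \bar s_{N_{\mathrm{scen}}} + \frac{1}{N_{\mathrm{scen}}}\sum_{i=1}^{N_{\mathrm{scen}}}\|s^{(i)}\|_Q^2 ,
\end{equation*}
where $\bar s_{N_{\mathrm{scen}}} := \frac{1}{N_{\mathrm{scen}}}\sum_i s^{(i)}$ is the sample mean of the stacked scenarios. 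The first term is exactly the DeePC tracking cost $\|r-y\|_Q^2$ under the identification $y = y_{\mathrm{pred}}$, as in the proof of Theorem~\ref{thm:zero_error_equivalence}. The last term does not depend on any decision variable and is therefore a constant offset. The cross term is linear in $y_{\mathrm{pred}}$ and is the only part that must be shown to disappear.

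The first limit is $N_{\mathrm{scen}} \to \infty$ with the buffer fixed. By~\eqref{eq:scenario_sampling}, the blocks $s_k^{(i)}$ are i.i.d.\ draws from the finite uniform distribution $\hat{\mathbb{P}}_w$. The strong law of large numbers therefore gives $\bar s_{N_{\mathrm{scen}}} \to \mathbf{1}_N \otimes \hat w$ almost surely, where $\hat w := \frac{1}{N_{\mathrm{buffer}}}\sum_j w^{(j)}$ is the buffer mean. The constant term converges almost surely to its finite expectation under $\hat{\mathbb{P}}_w$. The second limit is $N_{\mathrm{buffer}} \to \infty$, where Assumption~\ref{ass:zero_mean_process} gives $\hat w \to \bar w = 0$. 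Together these make the cross term vanish, so the Scenario-DeePC cost converges to the DeePC cost~\eqref{eq:DeePC_cost} plus a constant. The remaining terms $\|u\|_R^2$, $\|g\|_{\lambda_g}^2$ and $\|\sigma_y\|_{\lambda_y}^2$, and the data constraints~\eqref{eq:ScDeePC_data}, are identical in both problems.

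The step I expect to be the main obstacle is passing from convergence of the cost to equality of the minimizers, which involves both the constraints and the limit itself. For the constraints, Assumption~\ref{ass:interior} states that the scenario output constraints are inactive. I would use this to argue that both problems reduce locally to the same equality-constrained problem, with the input constraints $u_k\in\mathcal{U}$ shared, so that the output constraints can be dropped without changing the optimizer. For the limit, I would work in the reduced variable $g$ as in Lemma~\ref{lem:support_dim}. There both costs are strictly convex and coercive quadratics by Lemma~\ref{lem:convexity}, with the same Hessian, and they differ only in the linear coefficient $-2\,Y_f^\top Q\,\bar s_{N_{\mathrm{scen}}}$. The unique minimizer of such a quadratic over a fixed closed convex set depends continuously (indeed Lipschitz) on the linear coefficient. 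Since that coefficient tends to zero in the iterated limit, the Scenario-DeePC minimizer converges to the DeePC minimizer, which gives the same optimal solution in the limit.
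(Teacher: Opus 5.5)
Your proposal is correct and follows essentially the same route as the paper: the same three-term expansion of the averaged tracking cost, the SLLN for $N_{\mathrm{scen}}\to\infty$ giving the buffer mean, Assumption~\ref{ass:zero_mean_process} for $N_{\mathrm{buffer}}\to\infty$, dropping the decision-independent second-moment term, and Assumption~\ref{ass:interior} to set the output constraints aside. Your extra step makes rigorous the passage from cost convergence to equality of minimizers, which the paper only asserts: the minimizer of the strongly convex reduced problem in $g$ depends Lipschitz-continuously on its linear coefficient (that coefficient is actually $+2\,Y_f^\top Q\,\bar s_{N_{\mathrm{scen}}}$, a harmless sign slip).
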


\begin{proof}
Since the scenario output constraints are inactive by assumption, they impose no binding restriction on the optimization in~\eqref{eq:Scenario-DeePC}, while the data-driven equality constraint and the input constraint are independent of the sampled scenarios $\{s^{(i)}\}_{i=1}^{N_{\mathrm{scen}}}$. The only term in~\eqref{eq:Scenario-DeePC} that depends on the scenarios is the averaged tracking cost, which expands as
\begin{equation}
\label{eq:proof_thm2_expansion}
\resizebox{0.95\linewidth}{!}{$
\displaystyle
\begin{multlined}
\hfill \frac{1}{N_{\mathrm{scen}}} \sum_{i=1}^{N_{\mathrm{scen}}}
\| r - y_{\mathrm{pred}} - s^{(i)} \|_Q^2 = \hfill \\
\| r - y_{\mathrm{pred}} \|_Q^2
- 2\, (r - y_{\mathrm{pred}})^\top Q \,\bar s_{N_{\mathrm{scen}}}
+ \frac{1}{N_{\mathrm{scen}}} \sum_{i=1}^{N_{\mathrm{scen}}} \| s^{(i)} \|_Q^2,
\end{multlined}
$}
\end{equation}
where
\begin{equation}
\label{eq:proof_thm2_sample_mean}
\bar s_{N_{\mathrm{scen}}}
\;:=\; \frac{1}{N_{\mathrm{scen}}} \sum_{i=1}^{N_{\mathrm{scen}}} s^{(i)}.
\end{equation}

The per-step blocks of the scenarios in~\eqref{eq:scenario_sampling} are drawn i.i.d.\ uniformly from the finite buffer $\mathcal{E}_{\mathrm{buffer}}$, so by the strong law of large numbers (SLLN) the sample mean converges block-wise, as $N_{\mathrm{scen}} \to \infty$, to the buffer mean:
\begin{align}
\label{eq:proof_thm2_LLN}
\bar s_{N_{\mathrm{scen}}}
\;&\xrightarrow{N_{\mathrm{scen}} \to \infty}\;
\mathbf{1}_N \otimes \frac{1}{N_{\mathrm{buffer}}} \sum_{j=1}^{N_{\mathrm{buffer}}} w^{(j)}\\
&\xrightarrow{N_{\mathrm{buffer}} \to \infty}\;
\mathbf{1}_N \otimes \bar w \;=\; 0,
\end{align}
where the second limit is Assumption~\ref{ass:zero_mean_process}. Hence the cross term in~\eqref{eq:proof_thm2_expansion} vanishes only in this iterated limit.

The third term in~\eqref{eq:proof_thm2_expansion} converges, again by the SLLN, to the second moment of the prediction error process:
\begin{equation}
\label{eq:proof_thm2_cw}
\frac{1}{N_{\mathrm{scen}}} \sum_{i=1}^{N_{\mathrm{scen}}} \| s^{(i)} \|_Q^2
\;\xrightarrow{N_{\mathrm{scen}} \to \infty}\;
c_s \;\ge\; 0.
\end{equation}
Crucially, $c_s$ depends only on the prediction error process and is constant with respect to the decision variables $(g, u, y_{\mathrm{pred}}, \sigma_y)$.
Combining \eqref{eq:proof_thm2_expansion}--\eqref{eq:proof_thm2_cw}, the Scenario-DeePC cost converges to
\begin{equation}
\label{eq:proof_thm2_limit_cost}
\| r - y_{\mathrm{pred}} \|_Q^2
+ \|u\|_R^2 + \|g\|_{\lambda_g}^2 + \|\sigma_y\|_{\lambda_y}^2
+ c_s.
\end{equation}
Since $c_s$ is independent of the decision variables, removing it does not change the minimizer. The minimization of~\eqref{eq:proof_thm2_limit_cost} therefore coincides with the optimization problem of standard DeePC~\eqref{eq:DeePC}. The optimal solutions of the two problems agree. 
\end{proof}

\begin{assumption}[Biased prediction error process]
\label{ass:biased_process}
The prediction error process $\{w^{(j)}\}$ generating the scenario buffer $\mathcal{E}_{\mathrm{buffer}}$, defined in~\eqref{eq:prediction_error} and capturing the combined effect of external disturbances, measurement noise, and residual data-driven modeling inaccuracies, has nonzero mean in the sense that, as $N_{\mathrm{buffer}} \to \infty$, 
\begin{equation}
\label{eq:biased_process}
\bar w \;:=\; \lim_{N_{\mathrm{buffer}} \to \infty}
\frac{1}{N_{\mathrm{buffer}}} \sum_{j=1}^{N_{\mathrm{buffer}}} w^{(j)}
\;\neq\; 0.
\end{equation}
\end{assumption}

\begin{theorem}[Scenario-DeePC as offset-corrected DeePC]
\label{thm:offset_correction}
Let Assumptions~\ref{ass:interior} and~\ref{ass:biased_process} hold. Then, in the iterated limit $N_{\mathrm{scen}} \to \infty$ followed by $N_{\mathrm{buffer}} \to \infty$, the Scenario-DeePC problem~\eqref{eq:Scenario-DeePC} is equivalent to a standard DeePC problem with shifted reference
\begin{equation}
\label{eq:shifted_reference}
\tilde r \;=\; r - \bar s, \qquad \bar s := \mathbf{1}_N \otimes \bar w,
\end{equation}
where $\bar s$ replicates the per-step bias $\bar w$ across the prediction horizon. In particular, the optimal predicted output satisfies $y_{\mathrm{pred}}^\star \to r - \bar s$, so that the mean scenario output satisfies $y_{\mathrm{pred}}^\star + \bar s \to r$. Scenario-DeePC thus automatically compensates the systematic prediction bias $\bar w$.
\end{theorem}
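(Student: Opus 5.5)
The plan follows the proof of Theorem~\ref{thm:zero_mean_equivalence}, but this time I keep the cross term instead of letting it vanish. By Assumption~\ref{ass:interior} the scenario output constraints are inactive for every sampled sequence, so they can be dropped without changing the minimizer. The data equalities~\eqref{eq:ScDeePC_data} and the input constraints do not depend on the scenarios. So the only scenario-dependent part of~\eqref{eq:Scenario-DeePC} is the averaged tracking cost, and I expand it exactly as in~\eqref{eq:proof_thm2_expansion}:
\begin{equation*}
\frac{1}{N_{\mathrm{scen}}}\sum_{i}\|r-y_{\mathrm{pred}}-s^{(i)}\|_Q^2
= \|r-y_{\mathrm{pred}}\|_Q^2 - 2(r-y_{\mathrm{pred}})^\top Q\,\bar s_{N_{\mathrm{scen}}} + \frac{1}{N_{\mathrm{scen}}}\sum_i\|s^{(i)}\|_Q^2 .
\end{equation*}

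Next I take the iterated limit. The per-step blocks are i.i.d.\ uniform on the finite buffer, so the SLLN gives $\bar s_{N_{\mathrm{scen}}}\to \mathbf{1}_N\otimes \frac{1}{N_{\mathrm{buffer}}}\sum_j w^{(j)}$ almost surely as $N_{\mathrm{scen}}\to\infty$. Assumption~\ref{ass:biased_process} then sends this to $\bar s=\mathbf{1}_N\otimes\bar w\neq 0$ as $N_{\mathrm{buffer}}\to\infty$. The second-moment term converges to a constant $c_s$, as in~\eqref{eq:proof_thm2_cw}, and this constant does not depend on the decision variables. I then complete the square:
\begin{equation*}
\|r-y_{\mathrm{pred}}\|_Q^2 - 2(r-y_{\mathrm{pred}})^\top Q\bar s
= \|(r-\bar s)-y_{\mathrm{pred}}\|_Q^2 - \|\bar s\|_Q^2 .
\end{equation*}
The limiting cost is therefore $\|\tilde r - y_{\mathrm{pred}}\|_Q^2+\|u\|_R^2+\|g\|_{\lambda_g}^2+\|\sigma_y\|_{\lambda_y}^2 + (c_s-\|\bar s\|_Q^2)$, with $\tilde r=r-\bar s$. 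Up to the additive constant this is exactly the DeePC cost~\eqref{eq:DeePC_cost} with reference $\tilde r$. The feasible sets also agree, because the output constraints are inactive under Assumption~\ref{ass:interior}, and the inactivity assumption is placed on the reference-shifted problem as well. So the two problems have the same minimizer. By Lemma~\ref{lem:convexity} that minimizer is unique, because the limiting cost is strictly convex.

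Finally, for the statement about the optimum I read ``$y_{\mathrm{pred}}^\star\to r-\bar s$'' as meaning that the minimizer of the shifted DeePC problem tracks $\tilde r$. It does so exactly when the regularization and input penalties do not pull it away, i.e., in the sense of the tracking objective of DeePC. Then $y^\star_{\mathrm{pred}}+\bar s\to r$ follows by adding $\bar s$. I will state this carefully as the optimizer of the tracking term, since the penalties $\|u\|_R^2$, $\|g\|_{\lambda_g}^2$ and $\|\sigma_y\|_{\lambda_y}^2$ generally prevent exact equality.

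The main obstacle is justifying that the minimizers converge under the iterated limit, not only the cost. I would handle it this way. The scenario costs differ from the limiting cost only through the linear term $-2(r-y_{\mathrm{pred}})^\top Q(\bar s_{N_{\mathrm{scen}}}-\bar s)$ and a constant. After $g$ is eliminated, these costs form a family of strictly convex, uniformly coercive quadratics whose linear coefficients converge. For such a family the minimizer is a continuous (affine) function of the linear coefficient, which is the standard stability of strongly convex QP minimizers. This gives convergence of $g^\star$, and hence of $y_{\mathrm{pred}}^\star=Y_fg^\star$, to the minimizer of the shifted problem.
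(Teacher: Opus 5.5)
Your proposal is correct and follows the paper's proof essentially step for step: drop the inactive scenario constraints, expand the averaged tracking cost while keeping the cross term, pass to the iterated SLLN limit $\bar s_{N_{\mathrm{scen}}}\to\mathbf{1}_N\otimes\bar w$, and complete the square to get the DeePC cost with $\tilde r=r-\bar s$ plus a decision-independent constant. Your additional points address steps the paper's proof passes over: convergence of the minimizers via the uniform $\lambda_g$-strong convexity, requiring the output constraints of the shifted DeePC problem to be inactive as well, and qualifying $y^\star_{\mathrm{pred}}\to\tilde r$ because the regularizers prevent exact tracking. One small correction: with $u_k\in\mathcal{U}$ retained, the minimizer is Lipschitz rather than affine in the linear coefficient, which still suffices.
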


\begin{proof}
Repeating the expansion~\eqref{eq:proof_thm2_expansion} and the SLLN argument~\eqref{eq:proof_thm2_LLN}, but now with $\bar w \neq 0$ by Assumption~\ref{ass:biased_process}, the cross term no longer vanishes: in the iterated limit $N_{\mathrm{scen}} \to \infty$ then $N_{\mathrm{buffer}} \to \infty$ the sample mean tends to $\bar s$, and the Scenario-DeePC cost converges to 
\begin{equation}
\label{eq:proof_thm3_limit}
\scalebox{0.85}{$
\displaystyle
\| r - y_{\mathrm{pred}} \|_Q^2
- 2\, (r - y_{\mathrm{pred}})^\top Q\, \bar s
+ c_s
+ \|u\|_R^2 + \|g\|_{\lambda_g}^2 + \|\sigma_y\|_{\lambda_y}^2,
$}
\end{equation}
where $c_s$ is the second moment of the prediction error process, defined as in~\eqref{eq:proof_thm2_cw}.

Completing the square in the tracking part:
\begin{equation}
\label{eq:proof_thm3_complete_square}
\begin{multlined}
\hfill\| r - y_{\mathrm{pred}} \|_Q^2 - 2 (r - y_{\mathrm{pred}})^\top Q \bar s = \\
\| (r - \bar s) - y_{\mathrm{pred}} \|_Q^2 - \| \bar s \|_Q^2.\hfill
\end{multlined}
\end{equation}
Substituting back into~\eqref{eq:proof_thm3_limit}, the Scenario-DeePC cost converges to
\begin{equation}
\label{eq:proof_thm3_shifted_cost}
\| \tilde r - y_{\mathrm{pred}} \|_Q^2
+ \|u\|_R^2 + \|g\|_{\lambda_g}^2 + \|\sigma_y\|_{\lambda_y}^2 + c',
\end{equation}
where $\tilde r = r - \bar s$ and $c' = c_s - \|\bar s\|_Q^2$ is constant with respect to the decision variables.
\eqref{eq:proof_thm3_shifted_cost} is precisely the cost of the standard DeePC problem~\eqref{eq:DeePC} with reference $\tilde r$ in place of $r$, plus an optimization-independent constant. Since $c'$ does not influence the minimizer, the optimal predicted output satisfies $y_{\mathrm{pred}}^\star \to \tilde r = r - \bar s$, and consequently
\begin{equation}
\label{eq:proof_thm3_offset_compensation}
y_{\mathrm{pred}}^\star + \bar s \;\to\; r,
\end{equation}
i.e., Scenario-DeePC steers the expected output toward the true reference.
\end{proof}
Both limits $N_{\mathrm{scen}} \to \infty$ and $N_{\mathrm{buffer}} \to \infty$ are mathematical devices that invoke the SLLN to obtain the equivalences of Theorems~\ref{thm:zero_mean_equivalence} and~\ref{thm:offset_correction} in closed form. In practice, both are finite and bounded by storage. The results nevertheless capture the behavior of Scenario-DeePC for sufficiently large buffers and scenario counts: under interior operation, its cost is well approximated by that of standard DeePC with the original reference when the prediction error is zero-mean, and with the bias-corrected reference $\tilde r = r - \bar s$ when it is not.

\section{Numerical Results}
\label{sec:Num_Results}

In this section, the performance of the proposed Scenario-DeePC approach is evaluated on two representative systems. First, an LTI model of a Boeing 747 is considered to demonstrate the effectiveness of the method in a setting where the assumptions underlying behavioral system theory are satisfied. Second, a nonlinear two-tank system is used to assess the performance of Scenario-DeePC beyond the LTI regime and to illustrate its enhanced robustness with respect to nonlinear dynamics.

Given the parameters in Table~\ref{tab:HP} and~\eqref{eq:nopt}, we obtain $n_{\mathrm{opt,Boeing}} = 961$ and $n_{\mathrm{opt,2\text{-}Tank}} = 176$. For $\beta = 10^{-6}$ and $\varepsilon = 0.1$, the corresponding lower bounds on the required number of scenarios according to~\eqref{eq:sample_complexity} are $N_{\mathrm{scen,Boeing}} = 19497$ and $N_{\mathrm{scen,2\text{-}Tank}} = 3797$.
As outlined in Section~\ref{sec:Theo_Results}, these bounds can be overly conservative in practice. In the following, we demonstrate through numerical simulations that satisfactory constraint satisfaction can be achieved with significantly fewer scenarios, as reported in Table~\ref{tab:HP}.

\subsection{Example 1: LTI Boeing 747}
As a first case study, we consider a benchmark example based on the longitudinal flight dynamics of a Boeing 747 from~\cite{Camacho_2007}. The continuous-time model is discretized with a sampling time of $T_s = 0.1\,\text{s}$, resulting in a discrete-time LTI system of the form~\eqref{eq:system} with the following system matrices (rounded to two decimal places):
{\setlength{\arraycolsep}{2pt}
\begin{equation}
\scalebox{0.89}{$
\begin{aligned}
&A =
\begin{bmatrix}
1.00 & 0.00 & -0.00 & -0.03 \\
-0.01 & 0.96 & 0.74 & 0.00 \\
0.00 & -0.01 & 0.95 & -0.00 \\
0.00 & -0.00 & 0.10 & 1.00
\end{bmatrix}, \;
B =
\begin{bmatrix}
0.00 & 0.10 \\
-0.06 & 0.02 \\
-0.11 & 0.06 \\
-0.01 & 0.00
\end{bmatrix}, \\
& C =
\begin{bmatrix}
1 & 0 & 0 & 0 \\
0 & -1 & 0 & 7.74
\end{bmatrix}, \quad
D =
\begin{bmatrix}
0 & 0 \\
0 & 0
\end{bmatrix}.
\end{aligned}$}
\end{equation}
}
The two controlled outputs are the longitudinal velocity $y_1$ and the climb rate $y_2$, which are controlled via the elevator $u_1$ and the throttle $u_2$. For a more detailed explanation of the simulated system, we refer to Chapter 6.4 of~\cite{Camacho_2007}. Input and output constraints are imposed to reflect physical and operational limitations of the system according to~\cite{Lazar_2023}:
\begin{equation}
\begin{aligned}
&\mathcal{U} := \{ u \in \mathbb{R}^2 : -20 \leq u_1 \leq 20,\; -20 \leq u_2 \leq 20\}, \\
&\mathcal{Y} := \{ y \in \mathbb{R}^2 : -25 \leq y_1 \leq 25,\; -15 \leq y_2 \leq 15 \}.
\end{aligned}
\end{equation}

Both DeePC and Scenario-DeePC use the same data within the Hankel matrices with the same hyperparameters (HP) displayed in Table~\ref{tab:HP}, which proved a robust choice in the HP tuning process.
Measurement noise during both scenario generation and control operation is drawn independently for each output channel according to
\begin{equation}
v_{\ell} \sim \mathcal{N}\left(0,\,(0.01\,y_{\max,\ell})^2\right), \quad \ell=1,2,
\label{eq:noise_sampling}
\end{equation}
which corresponds to a sensor inaccuracy of $1\%$ of the maximum output value. No external disturbances are considered in this case.
To verify Assumption~\ref{ass:mixing} empirically, we compute the sample autocorrelation function (ACF) of the buffered prediction errors,
\begin{equation}
    \rho(\tau)
    \;=\;
    \frac{
        \sum_{j=1}^{N_\mathrm{buffer}-\tau}
        \bigl(w^{(j)} - \bar{w}\bigr)
        \bigl(w^{(j+\tau)} - \bar{w}\bigr)
    }{
        \sum_{j=1}^{N_\mathrm{buffer}}
        \bigl(w^{(j)} - \bar{w}\bigr)^2
    },
    \label{eq:empirical_acf}
\end{equation}
where $\tau$ is the lag and $\bar{w}$ is the sample mean of the buffer. For the offline buffer used above, the empirical ACF already decays to within the 95\,\% confidence band $\pm 1.96/\sqrt{N_\mathrm{buffer}}$ for $\tau \geq M_\mathrm{emp}$ with $M_\mathrm{emp} = 0$ (see Figure~\ref{fig:acf}), so the buffered prediction errors are effectively independent and every computed prediction error can be retained, in line with Assumption~\ref{ass:mixing}.

\begin{figure}
    \centering
    \includegraphics[width=\linewidth]{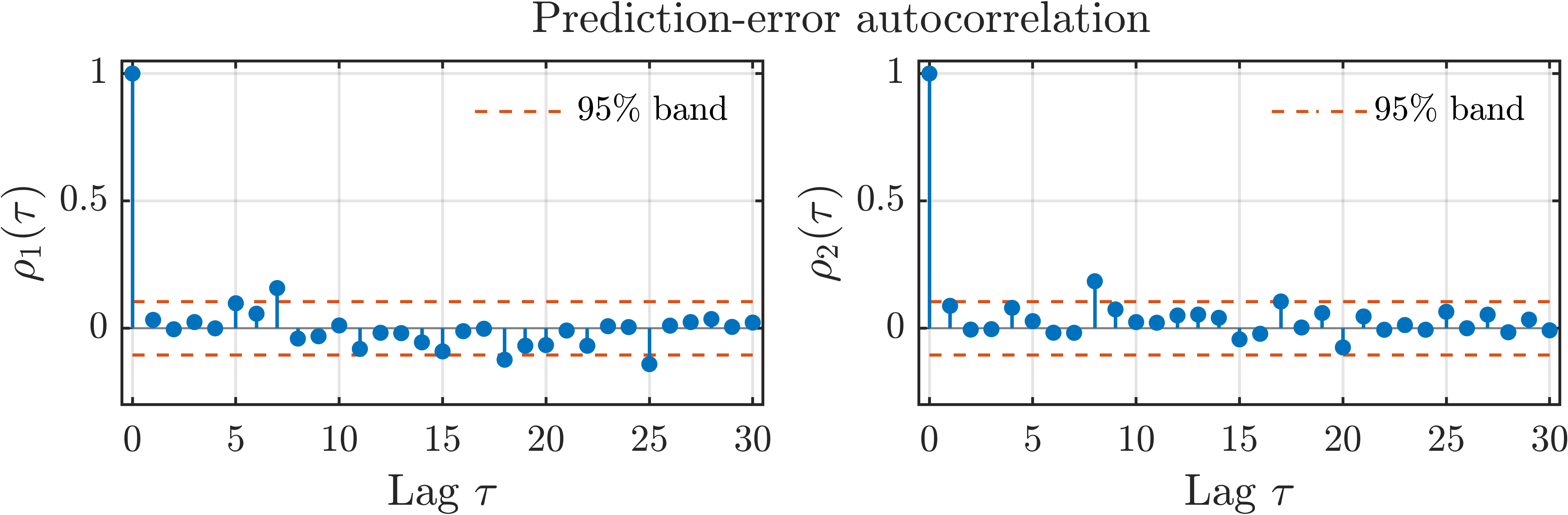}
    \caption{Prediction-error autocorrelation $\rho_\ell(\tau)$ for the Boeing~747 outputs $y_1$ (left) and $y_2$ (right), with $95\%$ white-noise band.}
    \label{fig:acf}
\end{figure}
\subsubsection{Nominal Case}
In the nominal case, the system is operated around an operating point (OP) sufficiently far from the imposed constraints. This setting is chosen to evaluate the reference tracking performance of DeePC and Scenario-DeePC without the influence of active constraint handling. 
As illustrated in the first 200 time steps of Figure~\ref{fig:Boeing}, DeePC and Scenario-DeePC exhibit comparable reference tracking performance when subjected to identical measurement noise realizations in every time step during control operation. This observation is further supported by the RMSE values reported in Table~\ref{tab:RMSE}, which show similar performance for both output variables.
\begin{figure}
    \centering
    \includegraphics[width=.9\linewidth]{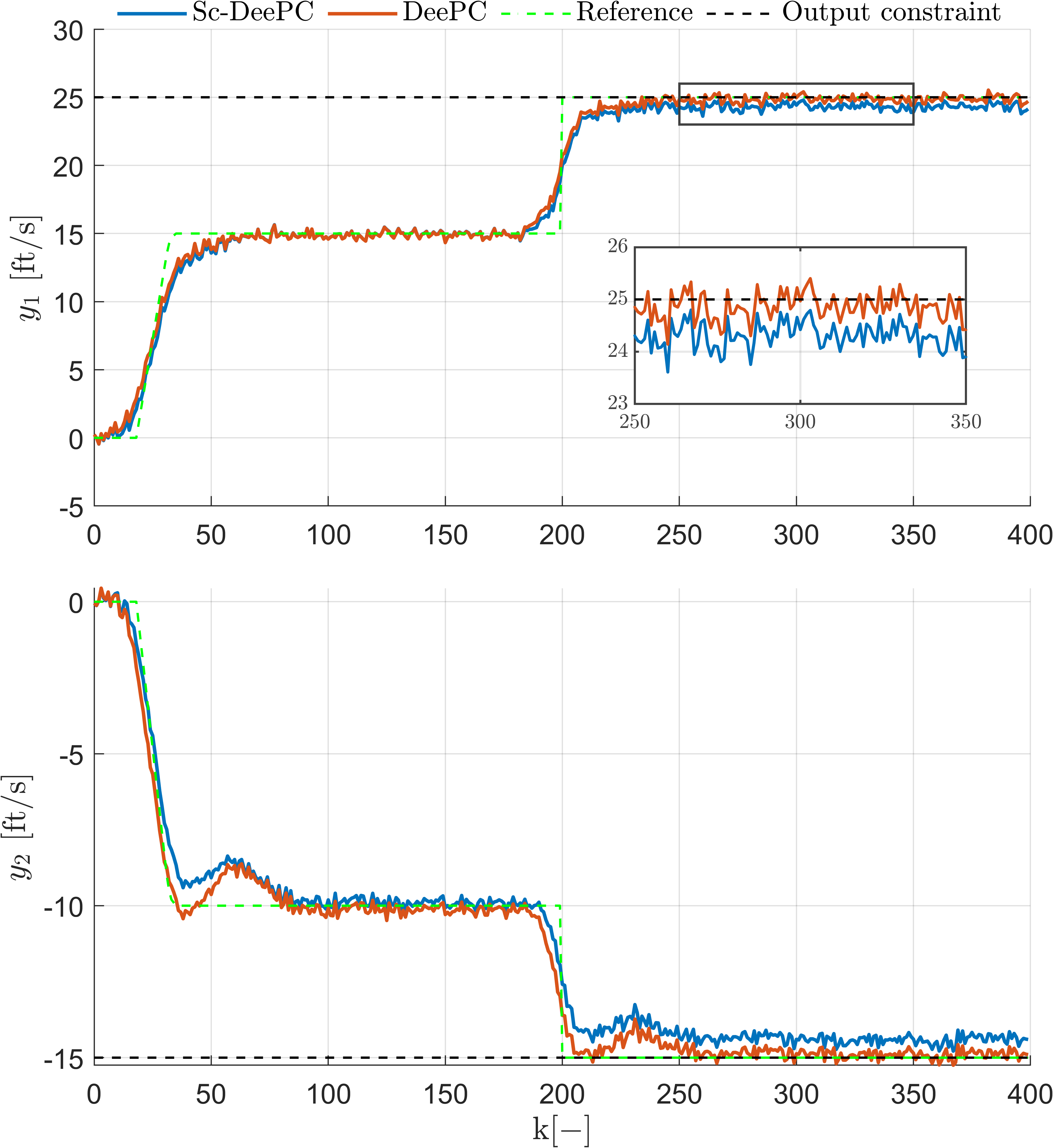}
    \caption{Nominal and robust case of the longitudinal system model controlled by Scenario-DeePC and DeePC with the velocity $y_1$ and the climb rate $y_2$ in $\mathrm{[ft/s]}$ for the discrete time steps $k$.  }
    \label{fig:Boeing}
\end{figure}
\subsubsection{Robust Case}
In the robust case, the system is operated on the edge of the constraint set to provoke constraint violations. This setup is chosen to evaluate the ability of Scenario-DeePC to enforce constraint satisfaction under uncertainty and to compare it with the standard DeePC formulation.
As shown in the last 200 time steps of Figure~\ref{fig:Boeing}, DeePC exhibits frequent constraint violations due to the absence of an explicit mechanism to account for uncertainty in constraint enforcement. In contrast, Scenario-DeePC significantly reduces the number of violations by incorporating chance constraints through the scenario-based formulation.
This improved constraint satisfaction comes at the cost of increased conservatism, as the scenario-based constraints tighten the admissible set and thereby degrade reference tracking performance compared to standard DeePC, as reflected in the evaluation metrics in Table~\ref{tab:RMSE}.
\begin{table}[]
\caption{RMSE and number of constraint violations for both Scenario-DeePC and DeePC for all experimental setups.}
\resizebox{\columnwidth}{!}{%
\begin{tabular}{l|cccc|cc}
            & \multicolumn{4}{c|}{$\mathrm{RMSE}$}                                                         & \multicolumn{2}{c}{$n_{\mathrm{violation}}$} \\
            & \multicolumn{2}{c|}{Scenario-DeePC}                           & \multicolumn{2}{c|}{DeePC}         & \multicolumn{1}{c|}{Scenario-DeePC}     & DeePC    \\
            & \multicolumn{1}{c|}{$y_1$} & \multicolumn{1}{c|}{$y_2$} & \multicolumn{1}{c|}{$y_1$} & $y_2$ & \multicolumn{1}{c|}{}             &          \\ \hline
Nominal     & 1.04 & \multicolumn{1}{c|}{0.65} & 1.04 & 0.68 & \multicolumn{1}{c|}{0}  & 0   \\
Robust      & 1.08 & \multicolumn{1}{c|}{0.80} & 0.74 & 0.35 & \multicolumn{1}{c|}{0}  & 82  \\
Adaptive    & 2.46 & \multicolumn{1}{c|}{2.02} & 2.27 & 1.95 & \multicolumn{1}{c|}{11} & 97  \\
2-Tank Nom. & 0.44 & \multicolumn{1}{c|}{-}    & 0.97 & -    & \multicolumn{1}{c|}{0}  & 0   \\
2-Tank Rob. & 0.66 & \multicolumn{1}{c|}{-}    & 2.71 & -    & \multicolumn{1}{c|}{40} & 397
\end{tabular}%
}
\label{tab:RMSE}
\end{table}

Figure~\ref{fig:Violations_RMSE} illustrates the RMSE of the outputs $y_1$ and $y_2$ with respect to their references, as well as the total number of constraint violations over the 200 time steps of the robust case. Both metrics are evaluated as a function of the number of scenarios $N_{\mathrm{scen}}$ used within the Scenario-DeePC framework, where $N_{\mathrm{scen}} = 0$ corresponds to standard DeePC.
Increasing the number of scenarios reduces constraint violations while gradually raising the RMSE, as the tightened constraints prevent the system from reaching references on the boundary of the admissible set. The trade-off is favorable: violations drop drastically for even a few scenarios, at a comparatively minor tracking cost increase, highlighting the effectiveness of the scenario-based approach.
\begin{figure}
    \centering
    \includegraphics[width=.75\linewidth]{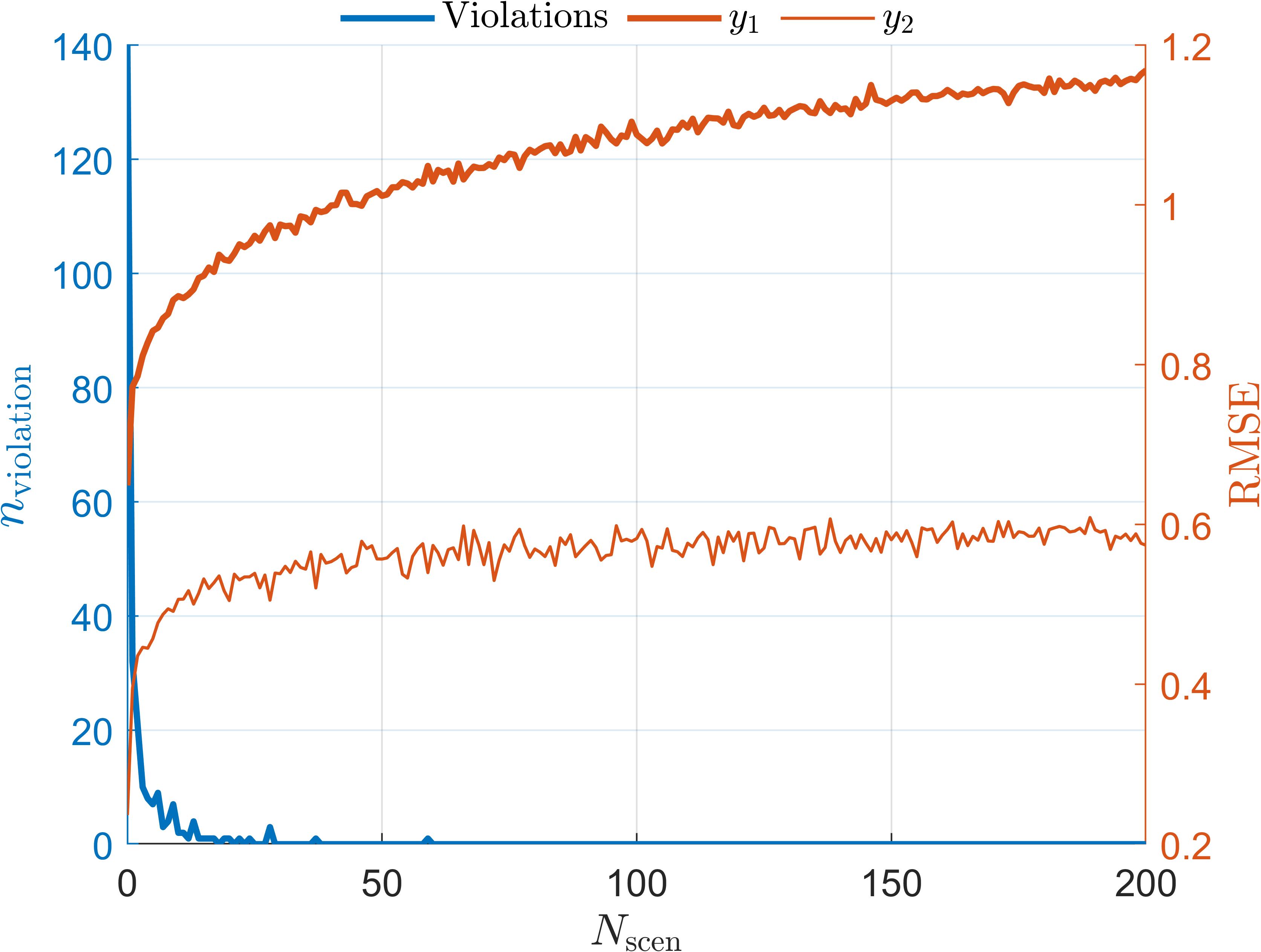}
    \caption{Number of constraint violations $n_{\mathrm{violation}}$ and RMSE of both outputs $y_1$ and $y_2$ dependent on the number of scenarios $N_{\mathrm{scen}}$ chosen for Scenario-DeePC.}
    \label{fig:Violations_RMSE}
\end{figure}

\subsubsection{Adaptive Scenario-DeePC}
In the adaptive Scenario-DeePC setting for the LTI Boeing model, the scenario replay buffer is not initialized with prior data but is instead populated online during operation. At each time step, new scenarios are generated according to \eqref{eq:prediction_error} and added to the buffer, while the oldest scenarios are discarded.
This setup is used to demonstrate the ability of Scenario-DeePC to adapt its constraint tightening in response to changing measurement noise and disturbances. Using the same Gaussian model as above, the noise standard deviation is set to $2\%$ of $y_{\max,\ell}$ initially, ramped down to $0.5\%$ at $k=100$ and back up to $1\%$ at $k=200$, while at $k=300$ a non-zero-mean disturbance reaching a constant offset of $1.0\,\mathrm{ft/s}$ on $y_1$ (representing a wind disturbance) is introduced. As discussed previously, both DeePC and Scenario-DeePC exhibit comparable tracking performance for LTI systems in the nominal case. Therefore, only the robust case is presented in Figure~\ref{fig:adaptive}, where differences in control performance become apparent.
At the beginning of the simulation, Scenario-DeePC behaves similarly to standard DeePC, as the scenario replay buffer is not yet sufficiently populated. Once enough scenarios have been collected, Scenario-DeePC exhibits increased constraint tightening, leading to improved constraint satisfaction compared to DeePC.
When the magnitude of the measurement noise is reduced at $k=100$, the scenario replay buffer gradually adapts, resulting in a reduction of conservatism over time. Conversely, when the noise level increases again at $k=200$, Scenario-DeePC temporarily exhibits a higher number of constraint violations until the scenario replay buffer is updated sufficiently, after which the constraint tightening increases accordingly.
Similarly, the introduction of a non-zero-mean disturbance on output $y_1$ at $k=300$, representing external influences such as wind, leads to a temporary mismatch until the replay buffer reflects the new operating conditions. Once updated, Scenario-DeePC again adapts and restores improved constraint satisfaction.
The size of the scenario replay buffer directly influences the adaptation speed of the constraint tightening: larger buffers require more time until they are sufficiently updated to reflect changes in the operating conditions. At the same time, the buffer length is lower bounded by the number of scenarios, i.e., $N_{\mathrm{scen}} \leq N_{\mathrm{buffer}}$. For the setup of Figure~\ref{fig:adaptive}, a buffer length of $N_{\mathrm{buffer}} = 50$ is chosen to enable sufficiently fast adaptation while still providing a clear visualization of the constraint tightening effect.
\begin{figure}
    \centering
    \includegraphics[width=.9\linewidth]{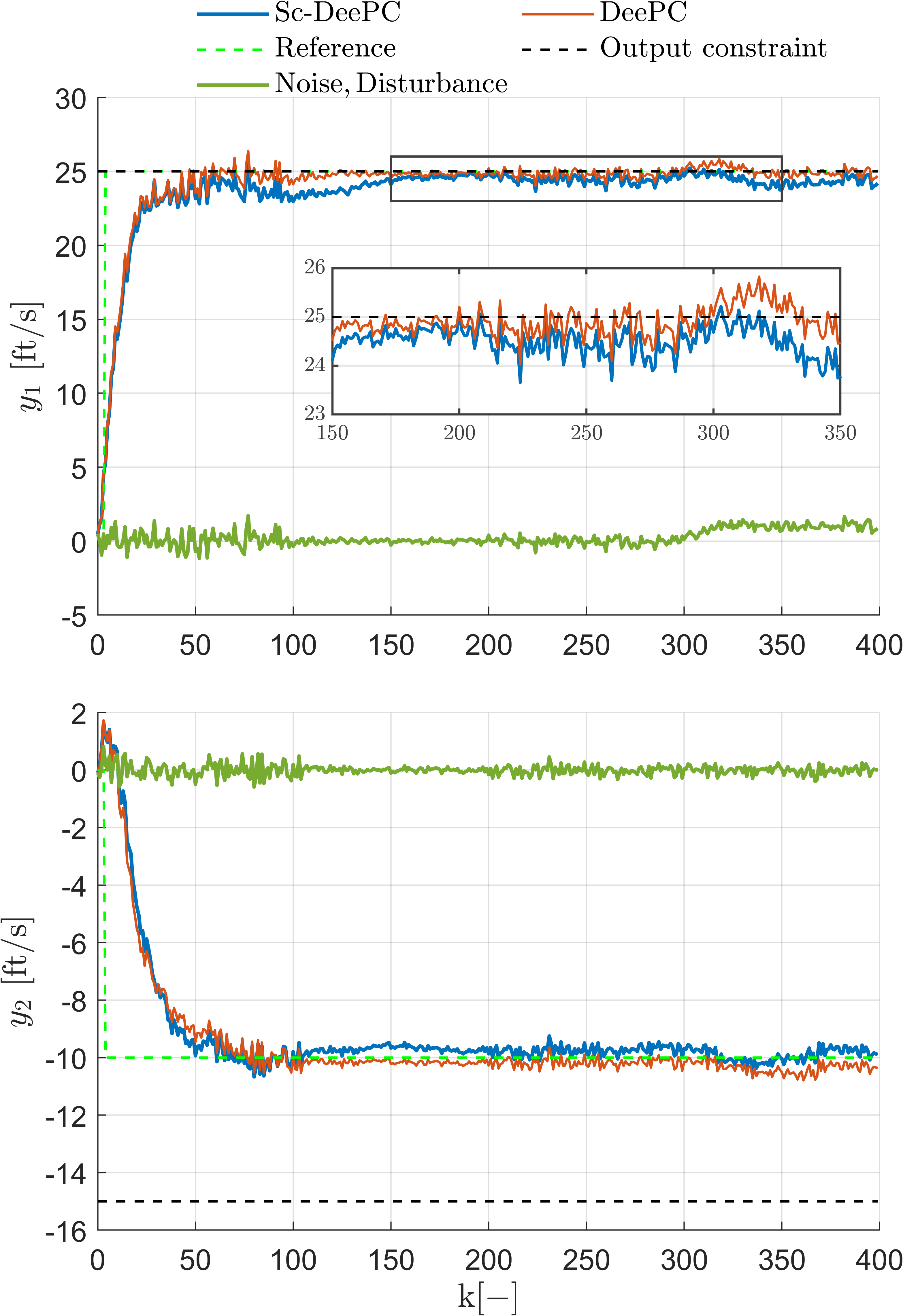}
    \caption{Adaptive Scenario-DeePC of the LTI Boeing 747 to visualize constraint tightening for changing external conditions. Direct comparison to DeePC under the same noise and disturbance profile (green).}
    \label{fig:adaptive}
\end{figure}
Overall, across all considered noise and disturbance configurations, adaptive Scenario-DeePC achieves significantly fewer constraint violations while maintaining good tracking performance. The corresponding RMSE values and the number of constraint violations are reported in Table~\ref{tab:RMSE}.
\begin{table}[]
\caption{Hyperparameters employed in the experiments for DeePC (excluding $N_{\mathrm{scen}}$, $N_{\mathrm{buffer}}$) and Scenario-DeePC.}
\resizebox{\columnwidth}{!}{%
\begin{tabular}{l|c|c|c|c|c|c|c|c|c}
         & $N$ & $T_{\text{ini}}$ & $T_d$  & $Q$ & $R$  & $\lambda_{y}$       & $\lambda_g$           & $N_{\mathrm{scen}}$ & $N_{\mathrm{buffer}}$ \\ \hline
Nominal, Robust & 20   & 20     & $10^{3}$       & $10^2$ & $10^{-3}$ & $10^7$ & $10^5$ & 50       & 400         \\
Adaptive & 20   & 20     & $10^{3}$       & $10^2$ & $10^{-3}$ & $10^7$ & $10^5$ & 25       & 50         \\
2-Tank    & 5   & 20   & $200$         & $10^{4}$ & $10^{-2}$ & $10^7$ & $10^4$ & 20       & 40           \\
\end{tabular}%
}
\label{tab:HP}
\end{table}
\subsection{Two-Tank System}
As a second case study, a nonlinear two-tank system is considered to evaluate the performance of Scenario-DeePC beyond the LTI setting. The system is represented as a Linear Parameter-Varying (LPV) model in~\eqref{eq:LPV_system}, where the scheduling variable $\theta$ depends on the water levels of the tanks according to~\eqref{eq:theta}. The system description is adopted from~\cite{Teppa_2025}, while the hyperparameters and sampling time $T_s =0.69\mathrm{s}$ are chosen consistently with~\cite{Zieglmeier_2025_Semi} and are listed in Table~\ref{tab:HP}. The constraint sets are given by~\eqref{eq:constraint}.
{\setlength{\arraycolsep}{1pt}
\begin{equation}
\scalebox{0.93}{$
\begin{aligned}
\dot{x}(t) &= 
\left[\begin{array}{cc}
-0.904 \theta_1(t) & 0 \\
0.904 \theta_1(t) & -0.508 \theta_2(t)
\end{array}\right] x(t)
+
\left[\begin{array}{c}
0.258 \\
0
\end{array}\right] u(t) \\
y(t) &= 
\left[\begin{array}{ll}
0 & 1
\end{array}\right] x(t).
\end{aligned}
$}
\label{eq:LPV_system}
\end{equation}}
\begin{equation}
{\theta}(t) =
\begin{bmatrix}
\theta_1(t) \\
\theta_2(t)
\end{bmatrix}
=
\begin{bmatrix}
1/\sqrt{x_1(t)} \\
1/\sqrt{x_2(t)}
\end{bmatrix}.
\label{eq:theta}
\end{equation}
\begin{equation}
\begin{aligned}
&\mathcal{U} := \{ u \in \mathbb{R}: 0 \leq u \leq 22 \}, \\
&\mathcal{Y} := \{ y \in \mathbb{R} : 0 \leq y \leq 25 \}.
\end{aligned}
\label{eq:constraint}
\end{equation}
The same Gaussian measurement-noise model of~\eqref{eq:noise_sampling} is applied to its single output ($n_y=1$) at $1\%$ of $y_{\max}$, and no external disturbance is injected.
Due to the nonlinear nature of the system, the data-driven model underlying DeePC cannot perfectly capture the system dynamics. As a result, standard DeePC exhibits a systematic OP-dependent prediction error, which manifests as an offset between predicted and measured outputs in addition to measurement noise. Regularization alone is insufficient to fully compensate for this effect.
In contrast, Scenario-DeePC leverages the online scenario generation mechanism to capture this mismatch. By continuously updating the scenario replay buffer with newly observed prediction errors, the method incorporates both stochastic disturbances and systematic data-driven model inaccuracies arising within the implicit identification step. As in the Boeing example, the decorrelation horizon is determined from the sample ACF~\eqref{eq:empirical_acf} of the standard-DeePC prediction error, evaluated per OP to remove the systematic offset. This yields $M=2$ here, so every third prediction error is added to the buffer, consistent with Assumption~\ref{ass:mixing}. Since these errors depend on the current OP, the scenario set adapts accordingly.
Through the inclusion of scenario-based outputs in the objective function, as defined in~\eqref{eq:Scenario-DeePC}, Scenario-DeePC effectively compensates for these offsets, established in Theorem~\ref{thm:offset_correction}, where a nonzero buffer mean acts as an effective shift of the reference.
Figure~\ref{fig:LPV} additionally shows the moving averages over 20 time steps of the outputs $y$ and the relative errors $e_{\mathrm{rel}}$, denoted by $y_{\mathrm{MA}}$ and $e_{\mathrm{rel,MA}}$, respectively, to highlight the offset caused by prediction inaccuracies that is otherwise hardly visible below the measurement noise. The first 200 time steps are omitted, as they are used to initialize the scenario replay buffer and to allow both control approaches to settle at the initial reference set point.
For the nominal cases ($k<600$) in Figure~\ref{fig:LPV}, once the replay buffer is sufficiently populated ($k\gtrapprox400$), Scenario-DeePC compensates the systematic offset and improves tracking, as seen in the decreasing moving average of the relative error, whereas standard DeePC cannot eliminate the offset arising from the nonlinear data in the Hankel matrices.
In the robust case ($k > 600$), where the system operates close to the constraint boundaries, Scenario-DeePC achieves significantly improved constraint satisfaction. In contrast, standard DeePC is unable to satisfy the constraints due to prediction errors arising from the system nonlinearities. At the same time, the relative error indicates that Scenario-DeePC leads to superior tracking performance compared to standard DeePC, further highlighting the benefit of the proposed approach for nonlinear systems.
\begin{figure}
    \centering
    \includegraphics[width=\linewidth]{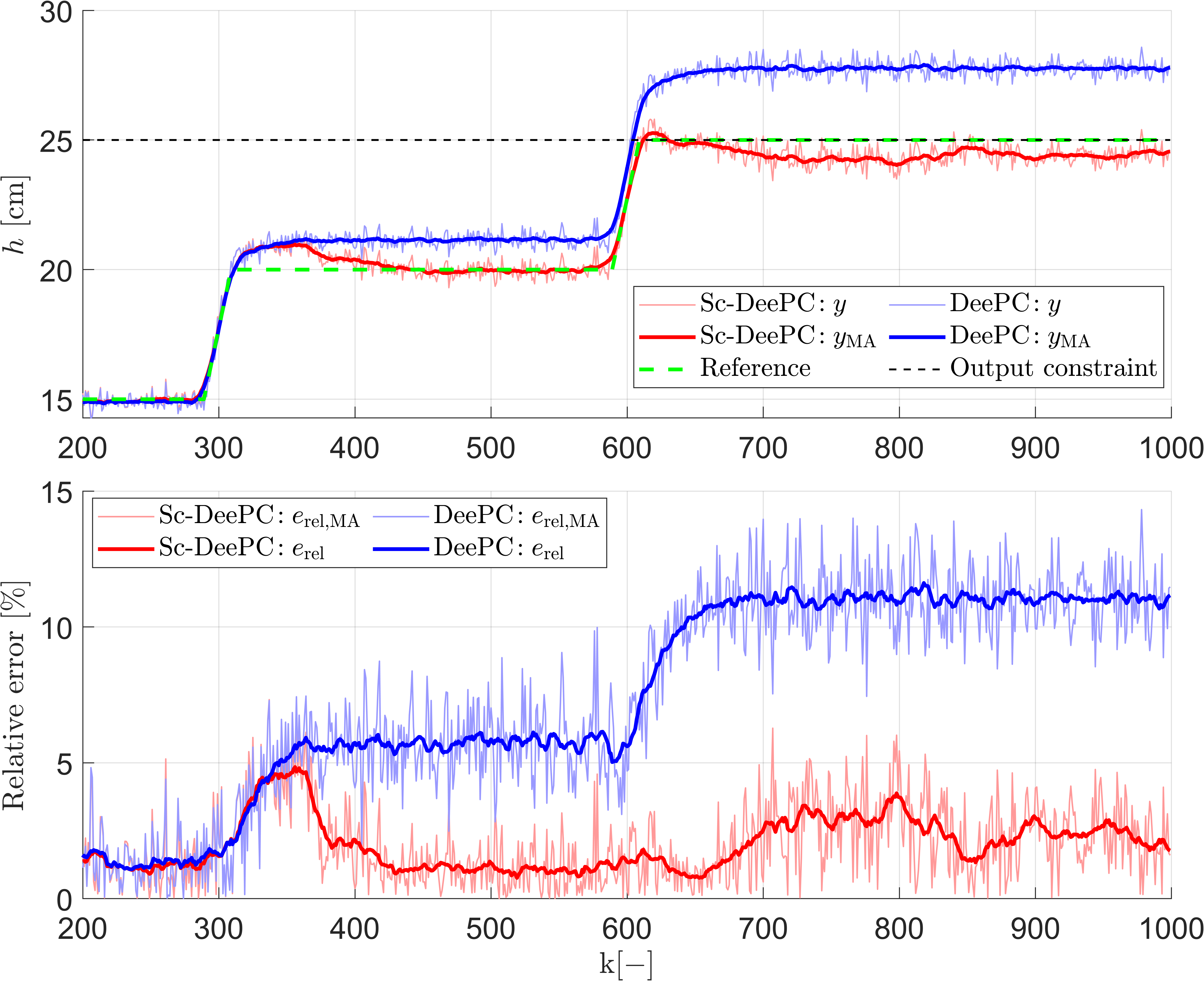}
    \caption{Upper: Water level $h$ of the second tank for standard DeePC and Scenario-DeePC tracking the reference trajectory. The thicker lines indicate the moving average of the respective outputs over a window of 20 time steps. Lower: Relative tracking error of both control approaches, including a moving average over 20 time steps (thicker lines).}
    \label{fig:LPV}
\end{figure}

\begin{remark}
    Since the prediction error offset depends on the current OP, effective compensation relies on the scenarios stored in the replay buffer accurately representing the current offset, which are continuously updated over time. For rapidly changing reference trajectories, the collected scenarios may not capture the OP-dependent offset sufficiently well, which can lead to reduced tracking performance compared to the steady set-point cases shown in Figure~\ref{fig:LPV}.
\end{remark}

\section{Conclusion}
\label{sec:Conclusion}
We introduced Scenario-DeePC, which combines DeePC with the scenario approach by leveraging an empirically collected distribution of prediction errors, fully aligning with the data-driven philosophy. The proposed method was developed to improve constraint satisfaction under uncertainty. On the theoretical side, we established well-posedness and the decision dimension of the scenario program, a probabilistic constraint-satisfaction guarantee, recursive feasibility, and consistency with standard DeePC, including an offset-compensation result (Theorem~\ref{thm:offset_correction}) in which a biased prediction-error buffer acts as an effective reference shift.
The numerical results demonstrate that Scenario-DeePC significantly reduces constraint violations compared to standard DeePC when operating near constraint boundaries, while maintaining comparable reference tracking performance in nominal conditions. Furthermore, the adaptive Scenario-DeePC formulation enables the controller to adjust its constraint tightening in response to changing noise and disturbance characteristics.
For the nonlinear two-tank system, Scenario-DeePC additionally improves tracking performance by compensating for OP-dependent prediction errors captured through the continuously updated empirical scenario replay buffer. Overall, Scenario-DeePC provides a simple and effective data-driven framework for achieving robust constraint satisfaction in control.

On the theoretical side, future work targets stability guarantees with data-based terminal ingredients or probabilistic reachable sets in the LTI setting. On the practical side, the OP-dependent prediction errors of nonlinear systems motivate multiple OP-dependent scenario buffers, combined with Gain-Scheduled-DeePC~\cite{Zieglmeier_2025_GS} for a more practical setup.

\hfill

\begingroup
\scriptsize
\noindent\textbf{Declaration of Generative AI and AI Technologies in the Writing Process:}\\
During the preparation of this work, the author(s) used ChatGPT in order to improve the
grammar of this paper. After using this tool/service, the author(s) reviewed and edited the
content as needed and take full responsibility for the content of the publication.

\noindent\textbf{Conflicts of Interest:}\\
The authors declare no conflicts of interest.

\noindent\textbf{Data Availability Statement:}\\
Code available under: \href{https://github.com/SebsDevLab/Scenario-DeePC.git}{https://github.com/SebsDevLab/Scenario-DeePC.git}
\par
\endgroup

\bibliography{Bib_Sc_DeePC}

\end{document}